\newcommand{\cmark}{\ding{51}}%
\newcommand{\xmark}{\ding{55}}%
\pgfplotsset{compat=1.12}
\definecolor{deepblue}{rgb}{0,0,0.5}
\definecolor{deepred}{rgb}{0.6,0,0}
\definecolor{deepgreen}{rgb}{0,0.5,0}
\tikzset{ 
    table/.style={
        matrix of nodes,
        row sep=-\pgflinewidth,
        column sep=-\pgflinewidth,
        nodes={
            rectangle,
            draw=black,
            align=center
        },
        minimum height=1.5em,
        text depth=0.5ex,
        text height=2ex,
        nodes in empty cells,
        every even row/.style={
            nodes={fill=gray!20}
        },
        column 1/.style={
            nodes={text width=2em,font=\bfseries}
        },
        column 3/.style={
            nodes={text width=40em,font=\bfseries}
        },
        row 1/.style={
            nodes={
                fill=black,
                text=white,
                font=\bfseries
            }
        }
    }
}
\newcommand\ap{\text{AP}}
\newcommand\Un{\mathcal{U}}
\newcommand\Leaf{\ensuremath\mathsf{Leaf}}
\newcommand\Fallback{\ensuremath\mathsf{Fback}}
\newcommand\Sequence{\ensuremath\mathsf{Seq}}
\newcommand\Parallel{\ensuremath\mathsf{Par}}
\newcommand\F{
\protect\tikz[baseline]{
    \protect\path[draw,line width=.12ex,line join=round]
      (0ex,.6ex) -- (.95ex,1.55ex) -- (1.9ex,.6ex) -- (.95ex,-.35ex) -- cycle;
  }}
\newcommand\G{
  \protect\tikz[baseline]{
    \draw[line width=.12ex,line join=round]
      (0ex,-.2ex) -- (0ex,1.3ex) -- (1.5ex,1.3ex) -- (1.5ex,.-.2ex) -- cycle;
  }}
\newcommand\X{\bigcirc}
\newcommand\Encode{\textsf{encode}}
\newcommand\Local{\textsf{local}}
\newcommand\Incremental{\textsf{incremental}}
\newcommand\Landmark{\textsf{landmark-based}}
\newcommand\Repair{\textsf{repair}_{\sigma_R}}
\newcommand\tree{\mathcal{T}}
\newcommand\true{\texttt{true}}
\newcommand\false{\texttt{false}}
\newcommand\found{\mathit{found}}
\newcommand\atLfirst{\mathit{atL_1}}
\newcommand\atLsecond{\mathit{atL_2}}
\newcommand\atLthird{\mathit{atL_3}}
\newcommand\avoidArea{\mathit{avoidArea}}
\newcommand\hasKey{\mathit{hasKey}}
\newcommand\openDoor{\mathit{openDoor}}
\newcommand\nextTo{\mathit{diagonalBehind}}
\newcommand\alignedHeading{\mathit{alignedHeading}}
\newcommand\aboveTouchdown{\mathit{aboveTouchdown}}
\newcommand\descended{\mathit{descended}}
\newcommand\stayPos{\mathit{stayPos}}
\newcommand\landing{\mathit{land}}
\newcommand\valid{\mathit{valid}}
\newcommand\loose{\mathit{loose}}
\newcommand\candidates{\mathit{candidates}}
\newcommand\landmarks{\mathit{landmark}}
\newcommand\computeCandidates{\textsf{computeCandidates}}
\tikzstyle{startstop} = [rectangle, rounded corners, text centered, draw=black, fill=red!30]
\tikzstyle{process} = [rectangle, text centered, draw=black, fill=blue!30]
\tikzstyle{decision} = [ellipse, text centered, draw=black, fill=green!30]
\tikzstyle{incrementalrepair} = [dotted, thick, draw=blue, rounded corners]
\tikzstyle{landmarkrepair} = [dotted, thick, draw=orange, rounded corners]
\tikzstyle{milpencoding} = [dotted, thick, draw=black, rounded corners]
\tikzstyle{arrow} = [thick,->,>=stealth]
\tikzset{%
   apple/.pic={
  \begin{scope}[scale=0.17] 
    \fill [MaterialBrown] (-1/8,0) 
      arc (180:120:1 and 3/2) coordinate [pos=3/5] (@)-- ++(1/6,-1/7) 
      arc (120:180:5/4 and 3/2) -- cycle;
    \fill [MaterialLightGreen500] (0,-9/10) 
      .. controls ++(180:1/8) and ++(  0:1/4) .. (-1/3,  -1)
      .. controls ++(180:1/3) and ++(270:1/2) .. (  -1,   0)
      .. controls ++( 90:1/3) and ++(180:1/3) .. (-1/2, 3/4)
      .. controls ++(  0:1/8) and ++(135:1/8) .. (   0, 4/7)
      .. controls ++( 45:1/8) and ++(180:1/8) .. ( 1/2, 3/4)
      .. controls ++(  0:1/3) and ++( 90:1/3) .. (   1,   0)
      .. controls ++(270:1/2) and ++(  0:1/3) .. ( 1/3,  -1)
      .. controls ++(180:1/4) and ++(  0:1/8) .. cycle;
    \fill [MaterialLightGreen600] (0, 4/7)
      .. controls ++( 45:1/8) and ++(180:1/8) .. ( 1/2, 3/4)
      .. controls ++(  0:1/3) and ++( 90:1/3) .. (   1,   0)
      .. controls ++(270:1/2) and ++(  0:1/3) .. ( 1/3,  -1)
      .. controls ++(180:1/4) and ++(  0:1/8) .. (   0,-9/10);
    \fill [MaterialGreen500, shift={(@)}, rotate=-30] 
      (0,0) arc (45:135:3/4 and 3/5) arc (225:315:3/4 and 3/5);
    \fill [MaterialGreen700, shift={(@)}, rotate=-30] 
      (0,0) arc (315:225:3/4 and 3/5) -- cycle;
  \end{scope} 
},
  orange/.pic={
  \begin{scope}[scale=0.17] 
        \fill [MaterialOrange500] (0,0) circle [radius=1];
        \fill [MaterialOrange600] (0,0) -- (45:1) arc (45:-135:1) -- cycle;
        \fill [MaterialOrange700, shift={(0,3/4)}] coordinate (@)
        ellipse [x radius=1/4, y radius=1/8];
        \begin{scope}
        \clip (0,0) circle [radius=1];
        \fill [MaterialOrange700, shift=(@)] (90:1/4 and 1/8) 
          \foreach \i [evaluate={\j=mod(\i,2)+1/4;}]in {0,...,12}{
          -- (90+\i*30:\j*3/4 and \j*3/8) } -- cycle;
        \end{scope}
        \fill [MaterialBrown] (-1/16, 3/4) -- ++(0,1/4) arc (180:0:1/16 and 1/32)
           -- ++(0,-1/4) arc (360:180:1/16 and 1/32) -- cycle;
        \fill [MaterialGreen500, shift=(@), rotate=-150] 
          (0,0) arc (45:135:1/2 and 4/5) arc (225:315:1/2 and 3/5);
        \fill [MaterialGreen700, shift=(@), rotate=-150] 
          (0,0) arc (45:135:1/2 and 4/5) -- cycle;
  \end{scope} 
},
}
\newcommand{\drawOrangeWithText}[3]{
    \begin{tikzpicture}
        \path pic {orange};
       \node at (#1, #2) {#3}; 
    \end{tikzpicture}
}
\newcommand{\drawAppleWithText}[3]{
    \begin{tikzpicture}
        \path pic {apple};
       \node at (#1, #2) {#3}; 
    \end{tikzpicture}
}
\newcommand{\drawOrange}[1]{
\hspace{-3mm}
\raisebox{#1\height}{
    \begin{tikzpicture}
            \path pic {orange};
    \end{tikzpicture}
    }
\hspace{-3mm}
}
\newcommand{\drawApple}[1]{
\hspace{-3mm}
\raisebox{#1\height}{
    \begin{tikzpicture}
            \path pic {apple};
    \end{tikzpicture}
    }
\hspace{-3mm}
}
\begin{document}
%
\title{Trace Repair for Temporal Behavior Trees}

\author{Sebastian Schirmer\inst{1}\orcidID{0000-0002-4596-2479} \and
Philipp Schitz\inst{1}\orcidID{0000-0001-7365-3430} \and
Johann C.\ Dauer\inst{1}\orcidID{0000-0002-8287-2376}\and
Bernd Finkbeiner\inst{2}\orcidID{0000-0002-4280-8441}\and
Sriram Sankaranarayanan\inst{3}\orcidID{0000-0001-7315-4340}
}
\authorrunning{Schirmer et al.}
%
\institute{DLR German Aerospace Center, Germany, \email{firstname.lastname@dlr.de}\and
CISPA Helmholtz Center for Information Security, Germany, \email{finkbeiner@cispa.de}\and
University of Colorado Boulder, USA, \email{srirams@colorado.edu}}

\maketitle              

\begin{abstract}
We present methods for repairing traces against specifications given as temporal behavior trees (TBT). 
TBT are a specification formalism for action sequences in robotics and cyber-physical systems, where specifications of sub-behaviors, given in signal temporal logic, are composed using operators for sequential and parallel composition, fallbacks, and repetition. 
Trace repairs are useful to explain failures and as training examples that avoid the observed problems. 
In principle, repairs can be obtained via mixed-integer linear programming (MILP), but this is far too expensive for practical applications. 
We present two practical repair strategies: (1) incremental repair, which reduces the MILP by splitting the trace into segments, and (2) landmark-based repair, which solves the repair problem iteratively using TBT's robust semantics as a heuristic that approximates MILP with more efficient linear programming. 
In our experiments, we were able to repair traces with more than $25,000$ entries in under ten minutes, while MILP runs out of memory.
\end{abstract}

\section{Introduction}
We study the problem of trace repair for temporal behavior tree specifications.
Behavior trees specify complex  sequences of actions for applications in robotics and cyber-physical systems (CPS)  using operators for sequential composition, fallbacks, repetitions, and parallel executions. Specification formalisms based on behavior trees  have been widely studied for applications such as runtime monitoring of CPS~\cite{6606326,10.1162/ARTL_a_00192,DBLP:conf/sle/GhzouliBJDW20,10155191,9921558,9355019}. \emph{Temporal behavior trees} (TBTs) are one such formalism based on behavior trees~\cite{10.1145/3641513.3650180}. TBTs combine formulas in Signal Temporal Logic (STL) using the operators defined by behavior trees. 
They have been shown to be strictly more powerful than linear temporal logic and equivalent to formalisms such as regular temporal logic~\cite{leucker2007regular}. 
Previous work has studied the construction of monitors that check whether a given trace satisfies a TBT specification, and provides \emph{quantitative robustness semantics} that measures the distance between a trace and a TBT~\cite{10.1145/3641513.3650180}. 
Furthermore, the monitor is able to 
split the input trace into segments while mapping each segment to a subtree of the TBT specification so that the overall Boolean verdict of the monitor can be justified based on the segmentation. 

\begin{figure}[b!]
\begin{minipage}{0.45\textwidth}
    \begin{tikzpicture}
        \node {$\Sequence$}[sibling distance=15mm, level distance=12mm]
            child { 
            node[align=center] {
            $\Fallback$}[sibling distance=10mm, level distance=8mm]
                child[level distance=10mm] {node {
                $\Sequence$\\}[sibling distance=22mm, level distance=10mm]
                    child[level distance=12mm] {
                        node {\scriptsize $\F\,\G_{[0,5s]} \nextTo$}}
                    child[level distance=20mm] {
                        node {\scriptsize $\alignedHeading\ \Un\, \aboveTouchdown$ }}
                }
                child[level distance=9mm] {node {...}}
            }
            child[level distance=8mm] {node {\scriptsize $\F\, \descended$ }};
    \end{tikzpicture}
    \vspace{6mm}
    \subcaption{Excerpt from a TBT specifying a UAV landing maneuver on a ship.}\label{fig:tbt}
\end{minipage}
\hfill
\begin{minipage}{0.5\textwidth}
    \includegraphics[scale=0.3]{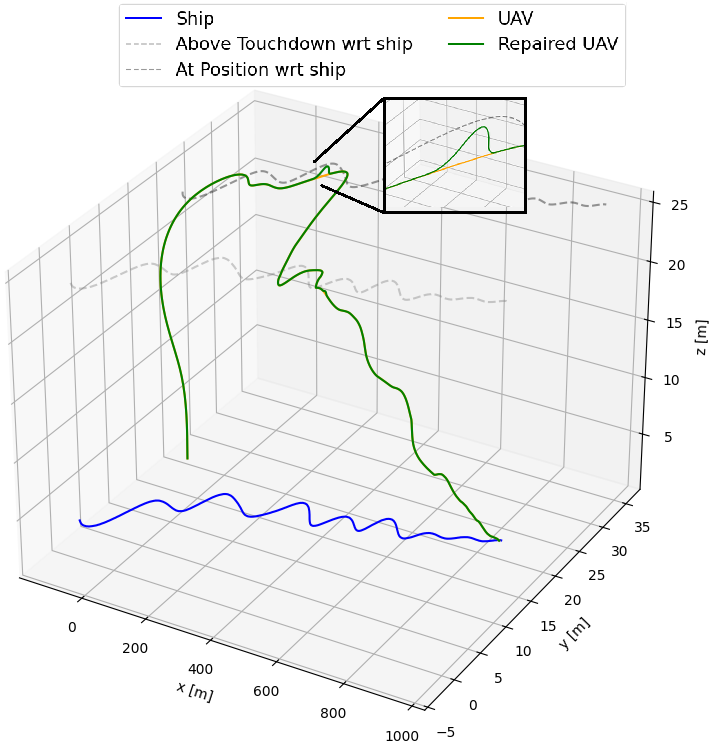}
    \subcaption{Trace repair of an execution of the landing maneuver against the TBT in Figure \ref{fig:tbt}.}\label{fig:plotlanding}
\end{minipage}
\caption{Repair of a UAV ship landing maneuver~\cite{10.1145/3641513.3650180} against a TBT specification.}
\label{fig:introduction}
\end{figure}

In this paper, we study the case that the observed behavior does \emph{not} satisfy the specification. 
We are interested in solving the \emph{trace repair} problem: 
given a violating trace and a specification, compute a trace that satisfies the specification while minimally modifying the original trace.
The repaired trace can be used as a modified plan that corrects the output of an untrusted planner based on a correctness specification. Furthermore, it provides valuable information about what went wrong in the violating trace. 
For systems that involve a machine-learned component (or a human operator), they are also useful as training examples to avoid the problem in the future.

Figure~\ref{fig:tbt} shows an excerpt of a TBT that specifies an automated landing maneuver of an unmanned aerial vehicle (UAV) on a ship.
The TBT decomposes the maneuver into two parts: first, the UAV moves to a position diagonally behind the ship and holds there for five seconds; second, the UAV transitions above the touchdown point while aligning its heading with the heading of the ship. Typically, the TBT would include other fallback maneuvers, such as approaching the ship from behind or from the side, which we omit here for brevity.
The formulas at the TBT's leaf nodes specify the successful execution of the various actions.
For instance, the UAV reaching and holding a diagonal position behind the ship is expressed by the STL formula  $\F\,\G_{[0,5s]} \nextTo$.

Figure~\ref{fig:plotlanding} shows a plot of a landing maneuver. Given a trace of the system events, a TBT monitor segments the trace by assigning parts of the trace to corresponding TBT nodes~\cite{10.1145/3641513.3650180}.
The monitor identifies which parts of the TBT are satisfied and which are violated. However, this information alone is still often difficult to interpret, because especially minor deviations that violate the TBT are difficult to detect manually.
In Figure \ref{fig:plotlanding}, while the UAV seems to reach the dotted line representing the correct diagonal position, it is actually slightly off and therefore violates the STL specification.
The minimal trace repair highlights this by adjusting the UAV's position closer to the dotted line while keeping the rest of the trace unchanged.

In theory, the repair problem could be solved using mixed-integer linear programming (MILP), where the system model, the trace, and the relevant parts of the TBT specification are encoded as constraints. To compute an ``optimal repair'' we impose a cost function, such as the $L_1$ norm, that penalizes deviation of the repaired trace states from the original trace states.   Since the size of the MILP grows with the length of the trace, the overall complexity of solving the MILP is exponential in the size of the trace. As a result, the MILP encoding approach does not scale.

We address this problem with two complementary repair strategies.
The first strategy,  \emph{incremental repair}, uses the segmentation provided by the TBT monitor to avoid encoding the entire trace.
Instead, it locally repairs segments, starting with leaf nodes and incrementally moving up the TBT only if necessary.
In our experiments, for a landing depicted in Figure \ref{fig:plotlanding} with a trace length of $1014$, this approach successfully repairs within $30$ seconds, whereas the straightforward MILP encoding takes over $2000$ seconds.
The second strategy, \emph{landmark-based repair}, is an iterative approach that identifies candidate landmarks within the trace for the repair.
If a repaired trace is found that satisfies the landmark, then its satisfaction guarantees the satisfaction of the TBT specification.
Given the abundance of candidates, we use the robust semantics of TBTs as a heuristic to find good candidates.
This approach entirely avoids the integer and binary variables introduced by the TBT encoding, enabling the problem to be solved as a linear program instead.
As a result, landmark-based repair successfully finds a solution for a trace with over $25,000$ entries in under ten minutes, where approaches that fully encode the TBT run out of memory.

The remainder of the paper is structured as follows: Section~\ref{sec:preliminaries} provides background on system models and TBTs, then Section~\ref{sec:trace_repair} introduces the repair strategies, finally Section~\ref{sec:experiments} demonstrates experimental results.

\section{Preliminaries}
\label{sec:preliminaries}
We revisit the definition of a system model and temporal behavior trees. 

\subsection{System Model}
We consider discrete-time systems with linear dynamics, represented by 
\begin{equation}
\label{eq:sys}
    X_{t+1} = A \cdot X_{t} + B \cdot U_{t},
\end{equation}
where $X_t \in \mathbb{X} \subseteq \mathbb{R}^n$ is the state at time step $t$ with $n$ variables,  $u_t \in \mathbb{U} \subseteq \mathbb{R}^m$ is the control input at time $t$. $A$ is an 
$n \times n$ while $B$ is a $n \times m$ matrix \cite{BEMPORAD20023}. 

\begin{example}\label{exp:dynamic_system}
Consider a one-dimensional integrator system with position $p$ and velocity $v$ as states, i.e., $X = [p,v]^T$, and an acceleration command as the input $U$. For a sampling time $t_s$, the system is described as
\begin{equation*}
    \begin{bmatrix}
    p_{t+1} \\ v_{t+1}
    \end{bmatrix} = \begin{bmatrix}
    1 & t_s \\ 0 & 1
    \end{bmatrix} \begin{bmatrix}
    p_{t} \\ v_{t}
    \end{bmatrix} + \begin{bmatrix}
    t_s^2/2 \\ t_s
    \end{bmatrix} U_t.
\end{equation*}
\end{example}
The execution of a system provides a \emph{trace} $\sigma$ that is a finite sequence of states $\sigma(1), \dots, \sigma(N)$, where $\sigma(i)$ represents the state of the model and the control input at timesteps $i \in \{1,\ldots,N\}$, i.e., $[X_{i-1},U_{i-1}]^T$. 
To access the state of the model $X_{i-1}$ at timestep $i$ we define a projection operation $\pi_X$ that extracts this state from $\sigma(i)$: $\pi_X(\sigma(i)) = X_{i-1}$.
The length of a trace is denoted as $|\sigma|$, with $|\sigma| = 0$ indicating an \emph{empty trace}.
Given a trace $\sigma$, the expression $\sigma[l:u]$ retrieves a slice of the trace from offsets $l$ to $u$ with $l,u \in \mathbb{N}_0$:
\[ 
\sigma[l:u] ::= \begin{cases}
\sigma(l+1), \cdots, \sigma(\min(u+1, N)), & \text{if}\ l \leq u\ \text{and}\ l < N \\ 
\text{empty trace} & \text{otherwise}\\
    \end{cases}
\]
For convenience, let $\sigma[l:] = \sigma[l: N-1]$ and $\sigma[:u] = \sigma[0:u]$. 

\begin{example}
Consider the system described in Example \ref{exp:dynamic_system}.
Let $t_s = 1$s, $X_0 = [0,1]^T$ and  $U_t = 0$ for all timesteps $t$. 
Executing the system for three timesteps yields $X_1 = [1,1]^T$, $X_2 = [2,1]^T$, and $X_3 = [3,1]^T$.
The corresponding trace has a length of four where $\sigma(1) = [0,1,0]^T$, $\sigma(2) = [1,1,0]^T$, $\sigma(3) = [2,1,0]^T$, and $\sigma(4) = [3,1,0]^T$. 
To exclude the first trace entry, slicing can be applied: $\sigma[1:] = \sigma(2), \sigma(3), \sigma(4)$.
Similarily, $\sigma[3:3] = \sigma(4)$ only accesses the last entry.
\end{example}

\subsection{Temporal Behavior Trees}
Temporal Behavior Trees (TBTs)~\cite{10.1145/3641513.3650180} extend behavior trees~\cite{6606326,10.1162/ARTL_a_00192,DBLP:conf/sle/GhzouliBJDW20,10155191,9921558,9355019}, as commonly used in robotics, with temporal formulas that are added to the leaf nodes of the trees. 

\begin{definition}
    [Syntax of Temporal Behavior Trees \cite{10.1145/3641513.3650180}]
    We construct a temporal behavior tree $\tree$ using the following syntax:
     \[\def\arraystretch{1.2} \begin{array}{rll}
    \tree & ::= \Fallback([\tree, \dots, \tree]), & \leftarrow \text{Fallback node} \\
    & |\ \Parallel_M( [\tree, \dots, \tree]), M \in \mathbb{N} & \leftarrow \text{Parallel node}\\
    & |\ \Sequence( [\tree, \tree]), & \leftarrow \text{Sequence node} \\
    & |\ \Leaf(\varphi), & \leftarrow \text{Leaf node} \\ 
    \end{array}\]
    where $\varphi$ is a local property expressed in a temporal language.
    Note that we restrict our focus to a selected subset of TBT operators for brevity and clarity. 
\end{definition}

Informally, $\Fallback([\tree_1, \ldots, \tree_n])$ mimics the semantics of a ``fallback'' node in a behavior tree: at least one of the subtrees $\tree_1, \ldots, \tree_n$ must eventually be satisfied by the trace. 
$\Parallel_M([\tree_1, \dots, \tree_n])$ denotes a parallel operator that specifies that at least $M$ distinct subtrees must be satisfied simultaneously by the trace $\sigma$. 
$\Sequence([\tree_1, \tree_2])$ is a sequential node that denotes that $\sigma$ must be partitioned into two  parts $\sigma_1 ; \sigma_2$ such that $\sigma_1 \models \tree_1$ and $\sigma_2 \models \tree_2$. 
For convenience, we rewrite $\Sequence([\tree_1, (\Sequence([\tree_2, \cdots, \Sequence( [\tree_{n-1}, \tree_n]))))$ for $n \geq 2$ as $\Sequence([\tree_1, \ldots, \tree_n])$.
Finally, a local formula at a leaf node specifies that the trace must satisfy the formula. 
For brevity, we occasionally omit $\Leaf(\varphi)$ and simply write $\varphi$.

We use Signal Temporal Logic (STL) to express local properties. 
STL is a prominent specification language for cyber-physical systems.
STL formulas are defined recursively as follows:
\[
    \varphi ::= \ap\ |\ \neg \varphi\ |\ \varphi \land \varphi\ |\ \varphi \lor \varphi\ |\ \F_{[l,u]} ( \varphi )\ |\ \G_{[l,u]} (\varphi)\ |\ \varphi \ \Un_{[l,u]}\ \varphi
\]
with an interval $[l,u]$ wherein $0 \leq l \leq u \leq \infty$.
Note that $\X(\varphi)$ is a shorthand for $\F_{[1,1]}$ as is $\F (\varphi)$ for $\F_{[0, \infty)}$.
Similar conventions apply for $\G (\varphi)$ and $\varphi_1 \Un \varphi_2$.
The set of \emph{atomic propositions} is $\ap \in \{p_1, \ldots, p_m\}$, with each $p_i$ associated with a function $f_i$ that maps states to a real number.
In this work, $f_i$ is restricted to be a linear function
A trace $\sigma$ satisfies $p_i$ if and only if $f_i(\sigma(1)) \geq 0$. 
Informally, STL extends propositional logic by temporal operators, where $\F_{[l,u]} ( \varphi )$ requires $\varphi$ to be satisfied eventually within $l$ and $u$ steps, $\G_{[l,u]} ( \varphi )$ requires $\varphi$ to be satisfied always within $l$ and $u$ steps, and $\varphi_1 \ \Un_{[l,u]}\ \varphi_2$ requires $\varphi_1$ to hold \emph{until}  $\varphi_2$ is eventually satisfied within $l$ to $u$ steps. 
Next, we formally define the satisfaction of a TBT specification that uses STL as specification language for its leaf nodes.

\begin{definition}
    [Boolean Semantics of Temporal Behavior Trees \cite{10.1145/3641513.3650180}]
\label{def:satTBT}
    The satisfaction of a TBT specification $\tree$ by a trace $\sigma$, denoted $\sigma \models \tree$, is defined as follows:
\[
\def\arraystretch{1.2} \begin{array}{lcl}
 \sigma \models \Fallback([\tree_1, \dots, \tree_n]) & \iff & \exists j \in [1,  n], \exists i \in [0,  |\sigma|-1],  \sigma[i:] \models \tree_j\\
 \sigma \models \Parallel_M( [\tree_1, \dots, \tree_n]) & \iff &  \exists I \subseteq [1,  n], |I| \geq M \land \forall i \in I, \sigma \models \tree_{i}\\
 \sigma \models \Sequence( [\tree_1, \tree_2]) & \iff &  \exists i \in [0,  |\sigma|-1], \sigma[: i] \models T_1 \land \sigma[i+1:] \models T_2\\
 \sigma \models \Leaf(\varphi), & \iff & \sigma \models \varphi  \\ 
 \sigma \models p_i\ & \iff & f_i(\sigma(1)) \geq 0 ~\mbox{if}~ |\sigma| > 0 ~\mbox{else}~ \false\\
 \sigma \models \neg \varphi & \iff &  \sigma \not\models \varphi\\
 \sigma \models \varphi_1 \land \varphi_2 & \iff & \sigma \models \varphi_1 \land \sigma \models \varphi_2 \\
 \sigma \models \varphi_1 \lor \varphi_2 & \iff & \sigma \models \varphi_1 \lor \sigma \models \varphi_2 \\
 \sigma \models \F_{[l,u]} ( \varphi ) & \iff & \exists\ i \in [l, u],\ \sigma[i:] \models \varphi \\
 \sigma \models \G_{[l,u]} (\varphi) & \iff & \forall\ i \in [l,u],\ \sigma[i:] \models \varphi \\
 \sigma \models \varphi_1 \ \Un_{[l,u]}\ \varphi_2 & \iff & \exists i \in [l,u],\ (\forall j \in [0, i-1],\ \sigma[j:] \models\ \varphi_1)\ \\ 
 & & \land\ \sigma[i:] \models \varphi_2 \\
 \end{array} 
\]
\end{definition}

Note that by Definition \ref{def:satTBT}, the formula $\neg\X true$ is true only  in the last state of a finite trace.
We abbreviate this formula by $\bullet$.

\newpage
\begin{example}\label{ex:oranges}
Consider a robot tasked with searching for objects, specifically apples and oranges\footnote{\url{https://robohub.org/introduction-to-behavior-trees/}}, while avoiding certain areas.
Assume there are three known locations, $L_1, L_2$, and $L_3$, where these objects are most likely to be found. The robot needs to search these locations in some order and discover either 
an apple or an orange in each location within a specified time limit. 
We can encode this search task using a TBT, as shown in Figure \ref{fig:exampletbt}.\vspace{-0.3mm}
The $\ap$s $\found$\drawApple{-0.2} and $\found$\drawOrange{-0.2} return $1$ if the robots reports finding an apple or an orange, respectively; otherwise, they return $-1$. 
The $\ap$s $\atLfirst$, $\atLsecond$, and $\atLthird$ denote that the robot is at positions $\mathit{L_1}$, $\mathit{L_2}$, and $\mathit{L_3}$, respectively. 
Similar, the $\ap$  $\avoidArea$ is satisfied iff the robot is outside the areas.
Note that according to the TBT, the order in which the robot finds an apple, an orange, or both is not relevant for successfully completing the task.
\end{example}

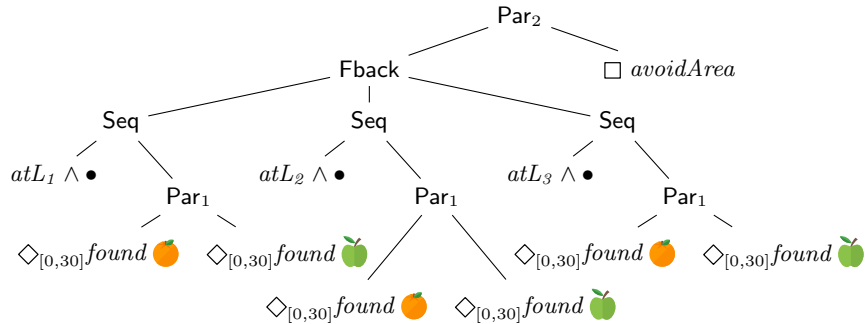
\begin{figure}
    \centering
    \begin{tikzpicture}
        \node {$\Parallel_2$}[sibling distance=40mm, level distance=7mm]
            child { node {$\Fallback$}[sibling distance=33mm, level distance=7mm]
                child {node {$\Sequence$}[sibling distance=18mm, level distance=7mm]
                child {node {$\atLfirst \wedge \bullet$}}
                child[level distance=10mm] {node {$\Parallel_1$}[sibling distance=25mm, level distance=8mm]
                    child {node { \drawOrangeWithText{-1.1}{-0.04}{$\F_{[0,30]} \found$}}}
                    child {node { \drawAppleWithText{-1.1}{-0.04}{$\F_{[0,30]} \found$}}}                  
                    }
                }
                child {node {$\Sequence$}[sibling distance=18mm, level distance=7mm]
                child {node {$\atLsecond \wedge \bullet$}}
                child[level distance=10mm] {node {$\Parallel_1$}[sibling distance=25mm, level distance=15mm]
                    child {node { \drawOrangeWithText{-1.1}{-0.04}{$\F_{[0,30]} \found$}}}
                    child {node { \drawAppleWithText{-1.1}{-0.04}{$\F_{[0,30]} \found$}}}
                    }
                }
                child {node {$\Sequence$}[sibling distance=18mm, level distance=7mm]
                child {node {$\atLthird \wedge \bullet$}}
                child[level distance=10mm] {node {$\Parallel_1$}[sibling distance=25mm, level distance=8mm]
                    child {node { \drawOrangeWithText{-1.1}{-0.04}{$\F_{[0,30]} \found$}}}
                    child {node { \drawAppleWithText{-1.1}{-0.04}{$\F_{[0,30]} \found$}}}
                    }
                }
            }
            child {node {$\G\ \avoidArea$}}
        ;
    \end{tikzpicture}
    \caption{TBT to search for at least an apple or an orange at different locations.}
    \label{fig:exampletbt}
\end{figure}
\section{Repairing Traces}
\label{sec:trace_repair}
Given a system model $M$, a TBT specification $\tree$, and a trace $\sigma$ with $\sigma \not \models \tree$, we construct a repaired trace $\sigma_R$ that satisfies $\tree$ with $|\sigma_R| = |\sigma|$, denoted by $\Repair(\tree, M, \sigma$, $\xi)$, as follows:
\begin{equation}
\label{eq:repair}
\left.\begin{array}{rll}\arg\min_{\sigma_{R}} & J(\sigma, \sigma_R) & \leftarrow \text{Minimize cost function}\\  
     s.t. & \sigma_R \models \Encode_{\sigma_R}(M)& \leftarrow \text{Trace follows system model }\\
& \sigma_R \models \Encode_{\sigma_R}(\tree)& \leftarrow \text{Trace satisfies TBT}\\
     & \sigma_R \models \xi & \leftarrow \text{Additional constraints, cf.\ Section \ref{sec:repairs}}\\
\end{array} \right\}
\end{equation}
Note that $\Repair$ does not change the length of the trace: each state of the repaired trace is one to one correspondent to a state of the original trace.
A repair problem can be infeasible. For instance,  one may need a strictly longer trace $\sigma_R$ to satisfy $\tree$. In this case, $\Repair$ returns \textsf{none}.
In the following, we introduce $\Encode_{\sigma_R}(M)$, $\Encode_{\sigma_R}(\tree)$, and present different cost functions $J$.
We then present complementary repair strategies.
We assume $\sigma$ and $\sigma_R$ are encoded using $2 \cdot |\sigma|$ continuous variables with $|\sigma|$ constraints for the original trace.

\paragraph{Encoding of the System Model:}
Since we consider discrete-time systems with linear dynamics, Equation \eqref{eq:sys} can be directly encoded as a MILP with $X_t$ being the states of the repaired trace $\sigma_R$.
In this way we ensure that $\sigma_R$ follows the dynamics.
We furthermore introduce lower and upper bounds for each control input in $U_t$.
We denote this encoding by $\Encode_{\sigma_R}(M)$.
The encoding $\Encode_{\sigma_R}(M)$ adds $3 \cdot |\sigma_R|$ constraints.
These constraints include the system model itself, a lower bound on the inputs, and an upper bound on the inputs.
Since $A$ and $B$ are constants, the encoding adds $|\sigma|$ continuous variables for $U$.

\paragraph{Encoding of the Temporal Behavior Tree:}
Since the length of the trace is known, the satisfaction of a trace with respect to a TBT specification, as defined in Definition \ref{def:satTBT}, can be encoded in a MILP by extending the MILP encoding for STL to TBT.
For a detailed explanation for STL, we refer to \cite{raman2014model}.
Next, we provide a summary of the STL encoding before extending it to TBTs.

For each STL (sub)formula $\psi$, and for a trace $\sigma_R$ with $|\sigma_R|=N$, we introduce a binary variable $z^\psi_t$ for each trace position $0\leq t < N$.
The variable $z^\psi_t=1$ if and only if $\psi$ is satisfied at position $t$ in the trace.
Since the length of the trace is known, existential quantifiers $\exists j \in [1,\dots,N]$ and universal quantifiers $\forall j \in [1,\dots,N]$ are encoded by $\bigvee_{j=1}^N$ and  $\bigwedge_{j=1}^N$, respectively.
Thus, the remaining task is to encode $\vee$ and $\wedge$, which can be done as in \cite{raman2014model}: 
\[
    \psi = \bigwedge_{i=1}^{m} \varphi_i: 
    \begin{cases}
        z_{t}^\psi \leq z_{t}^{\varphi_i}, i \in [1,\dots,m] &\\
        z_{t}^\psi \geq 1 - m + \sum_{i=1}^{m}  z_{t}^{\varphi_i} \\
    \end{cases}
    \psi = \bigvee_{i=1}^{m} \varphi_i: 
    \begin{cases}
        z_{t}^\psi \geq z_{t}^{\varphi_i}, i \in [1,\dots,m] &\\
        z_{t}^\psi \leq \sum_{i=1}^{m}  z_{t}^{\varphi_i} \\
    \end{cases}
\]
To extend the encoding to TBTs $\tree$, we introduce binary variables that account for the current segment: $z^\tree_{[t_1:t_2]}$. 
The encoding is as follows:
\begin{equation}
\label{eq:tbtencoding}
\begin{array}{ll}
    \tree = \Leaf(\varphi)
        & : z_{[t_1:t_2]}^{\tree} = z_{[t_1:t_2]}^{\varphi}\\
    \tree = \Fallback([\tree_1, \dots, \tree_n])
        & : z_{[t_1:t_2]}^{\tree} = \bigvee_{j=1}^{n}\bigvee_{i=t_1}^{t_2} z_{[i:t_2]}^{\tree_j}  \\
    \tree = \Parallel_M( [\tree_1, \dots, \tree_n])
        & : z_{[t_1:t_2]}^{\tree} = (\sum_{j=1}^n z_{[t_1:t_2]}^{\tree_{j}}) \geq M\\
    \tree = \Sequence( [\tree_1, \tree_2])
        & : z_{[t_1:t_2]}^{\tree} = \bigvee_{i=t_1}^{t_2-1} (z_{[t_1:i]}^{\tree_1} \wedge z_{[i+1:t_2]}^{\tree_2})\\
\end{array}
\end{equation}

We denote this encoding by $\Encode_{\sigma}(\tree)$.
A trace $\sigma$ satisfies a specification $\tree$ iff $z_{[0:N-1]}^\tree = 1$.
Let $|\varphi|$ represent the size of the temporal formula. 
Unlike the previous STL encoding, which introduces $(|\ap| + |\varphi|) \cdot N$ binary variables, this encoding requires $(|\ap| + |\tree|) \cdot N^2$ binary variable to account for the different segments with the same number of constraints that track their satisfaction.

\newpage
\begin{example}
Consider the TBT $\Sequence([\Leaf(\F\G \hasKey), \Leaf(\openDoor)])$, abbreviated by $\Sequence(\dots)$, that specifies that a robot must first find a key before opening the door.
Given a trace of length ten, to check whether the trace satisfied the TBT, i.e., $z^{\Sequence(\dots)}_{[0:9]} = 1$, it is necessary to search for a satisfying transition between its children.
We therefore compute $\bigvee_{i=0}^{8} (z_{[0:i]}^{\Leaf(\F\G \hasKey)} \wedge z_{[i+1:9]}^{\Leaf(\openDoor)})$.
\end{example}

\paragraph{Encoding of Cost Functions}
The cost function in Equation \eqref{eq:repair} ensures that the repaired trace $\sigma_R$ is \emph{similar} to the original trace $\sigma$, thereby providing a clear and intuitive explanation of \emph{what should have been done differently} to satisfy the TBT $\tree$ specification.
We consider the following cost functions:
\begin{itemize}
    \item \emph{L1-Distance} $L1(\sigma, \sigma_R) = \sum_{i=1}^{|\sigma|} \|\pi_X(\sigma(i)) - \pi_X(\sigma_R(i))\|_1$\\
    This metric computes the point-wise distance between both trajectories. 
    \item \emph{Hamming-Distance} $H(\sigma, \sigma_R) = \sum_{i=1}^{|\sigma|} \begin{cases}
        1 &, \pi_X(\sigma(i)) \neq \pi_X(\sigma_R(i))\\
        0 &, \text{otherwise}.
    \end{cases}$\\
    This metrics counts the changes necessary to the original trajectory.
\end{itemize}
We also consider the robust semantics of TBTs as cost function.
Instead of providing a Boolean verdict, the robust semantics of a TBT yields a numerical value.
A positive value indicates that the specification is satisfied, while a negative value corresponds to a violation.
Additionally, the magnitude of the numerical value reflects the degree to which the specification is satisfied or violated.
For a formal introduction of robust semantics of TBTs, we refer to \cite{10.1145/3641513.3650180}.
In essence, the robust semantics of TBTs can be viewed as a variation of the Boolean semantics described in Definition \ref{def:satTBT}.
Specifically, in the robust semantics, all instances of $\exists$ and $\vee$ are replaced by $\max$, while all instances of $\forall$ and $\wedge$ are replaced by $\min$.
This transformation allows the semantics to yield a numerical value that indicates the degree of satisfaction that can be used as cost function $R(\sigma_R)$ as in \cite{raman2014model} to maximize ``satisfaction''. 
We also introduce a weighted combination $W(\sigma, \sigma_R)$, which integrates multiple objectives and constraints into the repair strategy: 
$\tau_{L1} \cdot L1(\sigma, \sigma_r) + \tau_{H} \cdot H(\sigma, \sigma_r) + \tau_{R} \cdot R(\sigma_r)$ where $\tau_{L1}$, $\tau_{H}$, and $\tau_{R}$ are positive weights that sum up to one.
Note that we can encode $|x|$ using linear constraints within the cost function by defining an auxiliary variable $a = |x|:  a \geq x \wedge a \geq -x$ or using the Big-M method~\cite{williamsModelBuildingMathematical2013}.
For brevity, the paper focuses on cost functions that are intuitive for pilots and control engineers.
Other cost functions that find repairs using shorter traces will be explored in future.

\subsection{Repair Strategies} \label{sec:repairs}
We introduce two repair strategies, to address two distinct scalability concerns.
The first strategy is an incremental approach, enabling local repairs of violating trace segments, which avoids  encoding the full trace.
The second strategy utilizes landmarks to resolve choices introduced by disjunctions in the specification.
This reduces the MILP encoding to a  linear program, which allows for more efficient optimization algorithms, such as the simplex algorithm \cite{vanderbeiLinearProgrammingFoundations2020}.
The strategies can be applied both in isolation and in combination with each other.

\subsubsection{Incremental Repair Strategy}

Incremental repair strategy is based on a \emph{segmentation} of the original violating trace with respect to the TBT. 
The idea of segmentation was developed in \cite{10.1145/3641513.3650180}, wherein a segmentation of a trace w.r.t.\ a TBT shows how the overall robustness of the trace w.r.t.\ a specification $\tree$ can be decomposed into the robustness of sub-traces with respect to sub-trees of $\tree$.  
Incremental repair uses the segmentation to attempt \emph{local repairs} of parts of the trace w.r.t.\ parts of the specification. If the local repair fails, the  incremental repair widens the scope of the repair iteratively, falling back on the original full repair of the trace w.r.t.\ the entire specification in the worst case. 

First, we recall the notion of a segmentation.
A segmentation of a trace with respect to a TBT $\tree$ divides the trace $\sigma$ into multiple subtraces of the form $\sigma[i: j]$ and assigns a (sub)tree to each of them.

\begin{definition}[Segmentation of a TBT \cite{10.1145/3641513.3650180}]\label{def:segmentation}
   The segmentation of a trace $\sigma$ with respect to a TBT $\tree$ is a directed acyclic graph $G = (V,E)$ whose vertex set  $V$ consists of  triples of the form 
   \[ V = \{ (\hat{\tree}, i, j)\ |\ \hat{\tree}\ \text{is a subtree of}\ \tree,\ 0 \leq i \leq j \leq |\sigma|-1 \} \,,\]
   and edges $E \subseteq V \times V$
such that the following conditions hold:
\begin{enumerate}
\item $(\tree, 0, |\sigma|-1) \in V$ corresponding to  the entire tree $\tree$ and the entire trace from indices $1$ to $|\sigma|$.
\item If a node $v$ is of the form  $(\Fallback([\tree_1, \ldots, \tree_k]), i, j) \in V$, and $i \leq j$  then there is precisely one subtree index $l \in [1, k]$ and a single trace index $i' \in [i, j]$ such that the edge $v \rightarrow (\tree_l, i', j) \in E$. 
\item If a node $v$ of the form  $(\Sequence([\tree_1, \tree_2]), i, j) \in V$, there exists a unique index $u$ such that $(\tree_1, i, u) \in V$, $(\tree_2, u+1, j) \in V$ and the edges $v \rightarrow (\tree_1, i, u) $ and $v \rightarrow (\tree_2, u+1, j) $ belong to $E$. 
\item If a node $v$ of the form $(\Parallel_M([\tree_1, \ldots, \tree_k]), i, j) \in V$, we have $M$ distinct indices $l_1,\ldots, l_M \in [1, k]$ such that the set $S = \{ (\tree_{l_1}, i, j), \cdots, (\tree_{l_M}, i, j)\} \subseteq V$ and edges from $v$ to each of the nodes in $S$ belong to $E$.
\item The set of vertices $V$ and edges $E$ are minimal: i.e, no proper subsets of $V, E$ satisfy the conditions stated above.
\end{enumerate}
\end{definition}
Notice the correspondence between the nodes in a segmentation and the notion of a trace satisfying/violating a TBT from Definition~\ref{def:satTBT}.
\begin{example}\label{ex:segmentation}
Figure \ref{fig:segmentation} provides a segmentation for a trace of length 100 using the TBT specification depicted in Figure \ref{fig:exampletbt}.
The segmentation shows that the robot tries to reach location $L_2$ and then tries to find an orange.
To check whether the execution was successful, the satisfaction of each leaf node must be evaluated. 
\end{example}

\begin{figure}
    \centering
    \begin{tikzpicture}
        \node {$(\Parallel_2, 0, 99)$}[sibling distance=40mm, level distance=7mm]
            child { node {$(\Fallback, 0, 99)$}[sibling distance=35mm, level distance=7mm]
                child {node {$(\Sequence, 60, 99)$}[sibling distance=50mm, level distance=7mm]
                child {node {$(\Leaf(\atLsecond \wedge \bullet), 60, 60)$}}
                child {node {$(\Parallel_1, 61, 99)$}[sibling distance=35mm, level distance=10mm]
                    child {node { \drawOrangeWithText{-0.7}{-0.04}{$(\Leaf(\F_{[0,30]} \found~~~~),61, 99)$}}}
                    }
                }
            }
            child {node {$(\Leaf(\G\ \avoidArea), 0, 99)$}}
        ;
    \end{tikzpicture}
    \caption{Segmentation graph for the TBT in Figure \ref{fig:exampletbt}.}
    \label{fig:segmentation}
\end{figure}
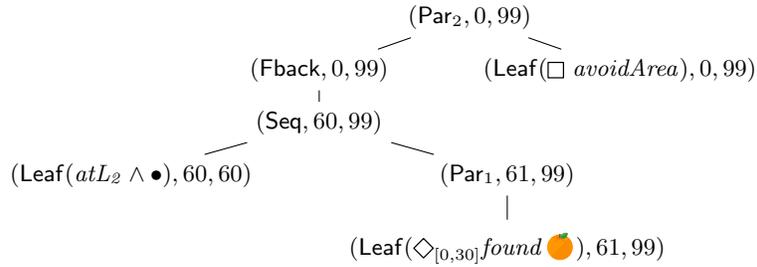

From the construction of a TBT, we establish an important result that satisfaction of a TBT reduces to checking the temporal formulas at the leaves given the segmentation.
\begin{proposition}
$\sigma \models_G \tree$ if and only if for all $(\Leaf(\varphi), i, j) \in V, \sigma[i:j] \models \varphi$. 
\end{proposition}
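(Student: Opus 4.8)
The plan is to proceed by structural induction on the subtree recorded in each vertex of the segmentation, exploiting the observation that $G$ resolves exactly the existential choices left open by Definition~\ref{def:satTBT}. First I would make explicit how I read the judgment $\sigma \models_G \tree$: it is the satisfaction of the root vertex $(\tree, 0, |\sigma|-1)$ under the recursive reading in which, for each internal vertex, the out-edges prescribed by conditions (2)--(4) fix the witnesses that the Boolean semantics would otherwise quantify over. Concretely, a vertex $(\Fallback([\tree_1,\dots,\tree_k]), i, j)$ is satisfied iff its unique child $(\tree_l, i', j)$ is satisfied; a vertex $(\Sequence([\tree_1,\tree_2]), i, j)$ is satisfied iff both $(\tree_1, i, u)$ and $(\tree_2, u+1, j)$ are satisfied; a vertex $(\Parallel_M([\tree_1,\dots,\tree_k]), i, j)$ is satisfied iff all $M$ children $(\tree_{l_m}, i, j)$ are satisfied; and a leaf vertex $(\Leaf(\varphi), i, j)$ is satisfied iff $\sigma[i:j] \models \varphi$.

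The crucial structural fact, which I would verify case by case against Definition~\ref{def:satTBT}, is that in every internal case the satisfaction of a vertex under $G$ is precisely the conjunction of the satisfaction of its children under $G$: a fallback has exactly one child, a sequence exactly two, and a parallel exactly $M$, and in each case satisfaction of the parent holds iff every child is satisfied. Committing the disjunctive choices (which subtree and which offset for fallback) turns the segmentation into a pure AND-structure, so the only real content of the statement is that a conjunction over a tree equals the conjunction over its leaves.

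With this in hand I would prove, by induction on the subtree $\hat\tree$ stored in a vertex, the strengthened statement: for every $(\hat\tree, i, j) \in V$ we have $\sigma[i:j] \models_G \hat\tree$ if and only if $\sigma[i':j'] \models \varphi$ for every leaf vertex $(\Leaf(\varphi), i', j')$ reachable from $(\hat\tree, i, j)$ in $G$. The induction is well founded because children always carry strictly smaller subtrees. The base case is a leaf, where the claim is immediate since the vertex is its own only reachable leaf. In the inductive step, the conjunctive characterization above rewrites satisfaction of the parent as the conjunction of satisfaction of its children, each of which the induction hypothesis equates with satisfaction of the leaves reachable from that child; since the leaves reachable from the parent are exactly the union of the leaves reachable from its children, the equivalence lifts. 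Applying this to the root $(\tree, 0, |\sigma|-1)$ yields $\sigma \models_G \tree$ iff all leaves reachable from the root are satisfied.

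The last step, and the place where I expect the only real subtlety, is to argue that the leaves reachable from the root are exactly the leaf vertices of $V$. This is where minimality (condition (5)) is essential: any vertex not reachable from $(\tree, 0, |\sigma|-1)$ could be deleted together with its incident edges while preserving conditions (1)--(4), so minimality forces every vertex of $V$ to be reachable from the root. Hence the reachable leaf set coincides with $\{(\Leaf(\varphi), i, j) \in V\}$, and the quantification ``for all leaf vertices reachable from the root'' becomes ``for all $(\Leaf(\varphi), i, j) \in V$,'' completing the argument. A secondary point I would check carefully is that the windowed slices $\sigma[i:j]$ used throughout agree, on each segment, with the suffix-based clauses of Definition~\ref{def:satTBT}, so that the leaf base case is stated against the correct sub-trace.
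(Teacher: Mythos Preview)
Your proposal is correct and follows the same approach the paper indicates: structural induction on the TBT, matching the Boolean semantics of Definition~\ref{def:satTBT} against the segmentation conditions of Definition~\ref{def:segmentation}. The paper gives only a one-sentence sketch, whereas you supply the details---in particular the observation that fixing the existential witnesses collapses each internal node to a conjunction of its children, and the use of minimality (condition~(5)) to identify the leaves reachable from the root with all of $\{(\Leaf(\varphi), i, j)\in V\}$---none of which the paper spells out.
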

The proof of the proposition follows by induction on the structure of the TBT and matching the semantics of TBT (Definition~\ref{def:satTBT}) against the definition of a segmentation (Definition~\ref{def:segmentation}). However, an ``optimal'' segmentation can be defined and computed even for violating traces, i.e., a subtrace violates its assigned subtree. Such a segmentation provides useful clues for repairing the trace.
\begin{example}
\label{ex:segmentationorange}
We now evaluate Example \ref{ex:segmentation}, assuming all leaves satisfy their STL formula with respect to their respective segment, except $\Leaf(\G\ \avoidArea)$.\vspace{-0.8mm}
Since $\vspace{-1mm}((\Parallel_1, 61, 99), \raisebox{-0.4\height}{\begin{tikzpicture}
        \path pic {orange};
       \node at (-0.8, -0.04) {$(\Leaf(\F_{[0,30]} \found~~~~),61, 99)$}; 
    \end{tikzpicture}}) \in E$ and the leaf is satisfied, it follows that $(\Parallel_1, 61, 99)$ is satisfied.
The same holds for $(\Sequence, 60, 99)$ and $(\Fallback, 0, 99)$: both are satisfied.
However, since $(\Leaf(\G\ \avoidArea), 0, 99)$ is not satisfied, $(\Parallel_2, 0, 99)$ is also not satisfied. 
Ideally, we only need to repair $\Leaf(\G\ \avoidArea)$ to satisfy the TBT specification. 
\end{example}

Next, we present the incremental repair strategy.
This strategy first tries to repair violating segments \emph{locally} with respect to their respective leaf nodes. In doing so, we may need to ensure that there is a valid transition from the end states of the repaired traces to the states of the original trace, so that the repaired segment can be substituted into the original trace.
If the local repair is unsuccessful, it incrementally moves up to the parent node according to the segmentation graph, attempting to fix segment transitions first, followed by adjusting segment boundaries only if necessary.
We define the local repair strategy as follows.

\begin{definition}[Local Repair]
\label{def:local}
Let $M$ be a system model, $\sigma$ be a trace, $G=(V,E)$ be a segmentation graph with $V' \subseteq V$, $f \in \{\valid, \loose\}$ indicate whether a valid transition to the next segment is required, and let $\textsf{validTransition}_M(\sigma, \sigma_R)$ be a function that checks for a valid transition from the repaired trace segment to the original trace.
We repair a trace locally, denoted $\Local(M, \sigma, V', f)$, as follows: 
\[
    \bigoplus_{(\tree,i,j)  \in V'} \Repair(\tree, M, \sigma[i:j], \textsf{validTransition}_M(\sigma, \sigma_R) \vee f = \loose) 
\]
where $\bigoplus$ merges the models, enforcing all constraints while minimizing the sum of the individual costs. 
Note that $\Repair$ is defined in Equation \eqref{eq:repair}.
\end{definition}

If no leaf node segments overlap, an efficient approach is to locally repair each leaf node and to update the trace when initial or last segment states change.
However, for more complex cases like the segmentation graph in Figure \ref{fig:segmentation} this is insufficient, as locally repairing $(\Leaf(\G\ avoidArea), 0, 99)$ can impact the satisfaction of the other leaf nodes.
The next repair strategy, which we call $\Incremental$ repair, accounts for that.
The $\Incremental$ repair strategy,  shown below as Algorithm \ref{alg:incremental}, repairs a given segmentation $G$.
It initializes two sets: $C$ for the repaired segments (Line 1) and $L$ for the segments to be repaired (Line 2).
Initially, $L$ contains all leaves in $G$ that violate its segment.
Since segmentation may have overlapping violating segments that must be repaired together to avoid side effects, e.g., $(\Leaf(\G\ \avoidArea), 0, 99)$ in Figure \ref{fig:segmentation}, $\textsf{mergeOverlap}_G(L)$ merges these overlapping sets in $L$ and removes entries if an ancestor node is already in the set.
Afterwards one set is removed from $L$ (Line 5) to be locally repaired next (Line 6 and 8).
If there is a existing repair (Line 10), then $\textsf{affectedLeaves}(\sigma, \sigma_R, V)$ checks if other leaves are affected by changes of $\sigma_R$.
This also includes the check of valid transition between segments in case of a \emph{loose} local repair (Line 8).
For instance consider Example \ref{ex:segmentationorange}, in the beginning the set $L$ is $\{ \{(\Leaf(\G\ \avoidArea), 0, 99)\}\}$ and if $\sigma_R[61:99] \neq \sigma[61:99]$ then the  evaluation of
$\vspace{-1.4mm}\raisebox{-0.4\height}{\begin{tikzpicture}
        \path pic {orange};
       \node at (-0.8, -0.04) {$(\Leaf(\F_{[0,30]} \found~~~~),61, 99)$};
    \end{tikzpicture}}$ might change. 
A repair is considered successful if only the leaves in $l$ are affected (Line 13).
Otherwise, a subsequent local repair must account for the affected leaves in the next iteration (Line 15).
If the local repairs were not successful (Line 18), then the repair moves to its next common ancestor w.r.t.\ the elements in $l$.
Finally, if there are no open repairs, i.e., $L=\emptyset$, we can read off $\sigma_R$ using the set $C$ (Line 22).

\begin{algorithm}[t]
\caption{The $\Incremental_G(M, \sigma)$ repair strategy.}
\label{alg:incremental}
\begin{algorithmic}[1]
\Require Segmentation $G=(V,E)$, a system model $M$, and trace $\sigma$
\Ensure Repaired trace $\sigma_R$ or $\textsf{none}$
\State $C \gets \{\}$ \Comment{Successful repairs}
\State $L \gets \{\{ v | v \in V \wedge v = (\Leaf(\varphi),i,j)  \wedge \sigma[i:j] \not\models \varphi  \}\}$ \Comment{Start with violating leaves} 
\State $L \gets \textsf{mergeOverlap}_G(L)$ \Comment{Groups overlapping leaves into one set} 
\While {$L \neq \emptyset$}
    \State $l \gets L.pop()$ \Comment{Set of nodes in $V$ that must be repaired }
    \State $\sigma_R \gets \Local(M, \sigma, l, valid)$ \Comment{If successful, it avoids affecting other leaves}
    \If{$\sigma_R = \textsf{none}$}
        \State $\sigma_R \gets \Local(M, \sigma, l, loose)$ \Comment{Could affect transitions to other leaves}
    \EndIf
    \If{$\sigma_R \neq \textsf{none}$} \Comment{There is a repair}
        \State $\text{affected} \gets \textsf{affectedLeaves}(\sigma, \sigma_R, V)$ \Comment{Tracks leaf changes due to $\sigma_R \neq \sigma$}
        \If{$\text{affected} \setminus l = \emptyset$}
            \State $C \gets C \cup (l, \sigma_R$) \Comment{Successful repair!}
        \Else
            \State $L = L \cup \text{affected}$ \Comment{Need to account for affected leaves}
        \EndIf
    \Else  \Comment{There is no repair, therefore we need to move up the TBT}
        \State $L \gets L \cup \textsf{getCommonAncestor}_G(l)$
    \EndIf
    \State $L \gets \textsf{mergeOverlap}_G(L)$
\EndWhile
\State \Return $\textsf{compose}(C, \sigma)$
\end{algorithmic} 
\end{algorithm}

The $\Incremental$ repair leverages information from the segmentation graph to avoid a costly exploration within the optimization model.
Note, however, that a segmentation omits the different choices for a $\Fallback$ node, which also means that the $\Incremental$ repair does not take these into account.
Fortunately, the approach presented in \cite{10.1145/3641513.3650180}, which utilizes dynamic programming, provides ``alternative segmentations'' for every choice of a $\Fallback$ node by simply ``reading off'' the table entries.
We denote the set of segmentations that contains all choices of $\Fallback$ nodes in a TBT $\tree$ as $\hat{G}$.

\begin{proposition}
If for all $G\in \hat{G}$, $\Incremental_G(M, \sigma) = \textsf{none}$ then there is no $\sigma_R$ with $|\sigma_R|=|\sigma|$ for which $\sigma_R \models \tree$.
\end{proposition}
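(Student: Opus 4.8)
The plan is to prove the contrapositive: if some length-preserving repair exists, i.e.\ a trace $\sigma_R$ with $|\sigma_R| = |\sigma|$ and $\sigma_R \models \tree$, then already $\Incremental_G(M,\sigma) \neq \textsf{none}$ for (in fact every) $G \in \hat{G}$. The backbone of the argument is the soundness and completeness of the monolithic repair $\Repair(\tree, M, \sigma, \cdot)$ of Equation~\eqref{eq:repair}. First I would verify that the MILP encoding is faithful: $\Encode_{\sigma_R}(M)$ reproduces the dynamics of Equation~\eqref{eq:sys} exactly, and $\Encode_{\sigma_R}(\tree)$ of Equation~\eqref{eq:tbtencoding} sets the root variable $z^{\tree}_{[0:|\sigma|-1]}$ to $1$ on a solution $\sigma_R$ iff $\sigma_R \models \tree$ in the sense of Definition~\ref{def:satTBT}. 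This is a structural induction on $\tree$ that mirrors the STL encoding of \cite{raman2014model}, with the four clauses of Equation~\eqref{eq:tbtencoding} matching the four TBT cases of Definition~\ref{def:satTBT} line for line (the $\Fallback$ disjunction over branches and offsets, the $\Parallel$ cardinality constraint, and the $\Sequence$ disjunction over split points). Consequently the feasible region of Equation~\eqref{eq:repair} is nonempty exactly when a length-preserving repair exists, so $\Repair(\tree,M,\sigma,\cdot)$ returns a trace in that case and $\textsf{none}$ otherwise.

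Next I would pin down the worst-case control flow of Algorithm~\ref{alg:incremental} on a fixed $G$. Every iteration either retires a group $l$ into $C$ or enlarges $L$; when a $\Local$ call fails, $L$ receives $\textsf{getCommonAncestor}_G(l)$, which moves the active repair strictly upward in the finite rooted DAG $G$. Since, by condition~1 of Definition~\ref{def:segmentation}, the unique source of $G$ is $(\tree,0,|\sigma|-1)$, the ascent can climb no higher than this root. At the root, Definition~\ref{def:local} makes $\Local(M,\sigma,\{(\tree,0,|\sigma|-1)\},f)$ coincide with $\Repair(\tree,M,\sigma,\cdot)$ --- the transition side-condition is vacuous because there is no subsequent segment --- i.e.\ the monolithic repair above. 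Hence, whenever $\Incremental_G$ reports $\textsf{none}$, this root-level repair must itself have been infeasible, and by the completeness established in the first step no length-preserving $\sigma_R$ can satisfy $\tree$.

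Combining the two steps proves the proposition directly: if $\Incremental_G(M,\sigma)=\textsf{none}$ for every $G\in\hat{G}$, then picking any single $G$ already forces the root repair $\Repair(\tree,M,\sigma,\cdot)$ to be infeasible, which rules out every repair of length $|\sigma|$. The quantification over all of $\hat{G}$ legitimises the heuristic phase rather than the conclusion: because a single segmentation pins one branch at each $\Fallback$ and the incremental phase never revisits those choices, enumerating $\hat{G}$ guarantees that, across runs, each witnessing combination of fallback branches (as identified through the segmentation--satisfaction correspondence preceding Definition~\ref{def:local}, where $\sigma\models_G\tree$ iff every leaf of $G$ is satisfied) is eventually subjected to a complete repair; the root encoding of Equation~\eqref{eq:tbtencoding} already ranges over all branches, so feasibility is witnessed by $\sigma_R$ itself regardless of which $G$ is chosen.

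I expect the main obstacle to be the second step: making ``the ascent reaches the root and runs the monolithic repair'' rigorous against the informal pseudocode. Concretely, one must show that the $\textsf{mergeOverlap}_G$, $\textsf{affectedLeaves}$, and $\textsf{compose}$ bookkeeping cannot terminate the while-loop with a premature $\textsf{none}$ before the active set has climbed to $(\tree,0,|\sigma|-1)$ --- e.g.\ that $\textsf{compose}(C,\sigma)$ returns $\textsf{none}$ only when no consistent gluing of the stored segment repairs exists, which at the root degenerates to the single feasibility question already decided by $\Repair$. A secondary subtlety is that $G$ segments the original $\sigma$ whereas a witness $\sigma_R$ may place its internal $\Sequence$ split points and segment boundaries elsewhere; this is harmless because the disjunctions of Equation~\eqref{eq:tbtencoding} quantify over those offsets internally, so only the finitely many $\Fallback$ branch selections --- exactly the data enumerated by $\hat{G}$ --- need to be matched, and at the root they are encoded exhaustively in any case.
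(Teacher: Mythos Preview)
The paper gives no separate proof of this proposition; its justification is folded into the proof of the theorem immediately following, which simply asserts that $\Incremental$ ascends in the worst case to the root $(\tree,0,|\sigma|-1)$, where $\Local$ coincides with the full MILP repair and returns $\textsf{none}$ only if no length-preserving $\sigma_R$ satisfies $\tree$. Your proposal takes exactly this route---contrapositive via faithfulness of the root-level encoding plus the worst-case ascent---and is in fact more careful than the paper, both in spelling out the structural induction behind $\Encode_{\sigma_R}(\tree)$ and in flagging the control-flow subtlety (that the bookkeeping cannot emit $\textsf{none}$ before the root is reached) that the paper leaves implicit.
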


\begin{theorem}
The $\Incremental$ repair strategy is sound w.r.t.\ a segmentation and terminates.
\end{theorem}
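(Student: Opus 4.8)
The plan is to prove the two claims separately, reducing soundness to the leaf-characterization proposition (that $\sigma_R \models_G \tree$ holds exactly when every leaf segment $(\Leaf(\varphi),i,j) \in V$ satisfies $\sigma_R[i:j]\models\varphi$) and handling termination through a well-founded ranking on the configurations of the \texttt{while} loop. The central object in both arguments is a loop invariant on the pair $(C,L)$: I would maintain that (i) every entry $(l,\sigma_R)\in C$ is a successful local repair whose changed states affect no leaf outside $l$, so that after substituting $\sigma_R$ the leaves covered by $l$ satisfy their formulas; and (ii) $L$ together with the regions already recorded in $C$ still covers all leaves of $G$ that are not yet known to be satisfied. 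Establishing that this invariant is preserved by each of the three branches of the loop body is the backbone of the whole proof.

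For soundness, I would first invoke the guarantee of $\Repair$ in Equation~\eqref{eq:repair}: any non-$\textsf{none}$ result of $\Local(M,\sigma,l,f)$ (Definition~\ref{def:local}) satisfies both $\Encode_{\sigma_R}(M)$ and $\Encode_{\sigma_R}(\tree)$ for every subtree in $l$, hence by the TBT encoding in Equation~\eqref{eq:tbtencoding} every leaf inside a repaired subtree satisfies its STL formula on its segment. The \textsf{affectedLeaves} test on Line~12 guarantees that a repair is committed to $C$ only when $\text{affected}\setminus l=\emptyset$, i.e.\ no previously-satisfied leaf is broken; the $\valid$/$\loose$ flag together with $\textsf{validTransition}_M$ ensures that the repaired segment still follows the system dynamics at its boundary, so the pieces recorded in $C$ compose into a single trace respecting Equation~\eqref{eq:sys}. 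When the loop exits with $L=\emptyset$, the invariant forces every leaf of $G$ to be satisfied on its segment, and the leaf-characterization proposition then yields $\textsf{compose}(C,\sigma)\models_G\tree$, establishing soundness whenever the strategy returns a trace rather than \textsf{none}.

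For termination, I would exhibit a lexicographic measure that strictly decreases at every iteration. Let $H$ be the (finite) height of $\tree$. To each node $(\hat\tree,i,j)$ assign the rank $H-\mathrm{height}(\hat\tree)$, so that replacing a set by its common ancestor via \textsf{getCommonAncestor} strictly lowers this rank, and this can happen at most $H$ times along any branch before reaching the root $(\tree,0,|\sigma|-1)$, where the repair covers all leaves and no further ancestor exists. The remaining concern is the branch where a repair instead triggers Line~15 and re-inserts affected leaves: here I would use that $V$ is finite and that \textsf{mergeOverlap} only ever coarsens $L$ into larger sets bounded above by the full node set, together with a secondary component counting the number of distinct leaves not yet represented in $C$, which never increases on this branch and strictly drops whenever a set is finalized into $C$.

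The hard part will be the termination argument on the affected-leaves branch, where I must rule out an infinite alternation between committing repairs to $C$ and re-expanding $L$; the key is to show that each re-expansion is eventually absorbed by a move up the tree, so that the height-based component dominates the lexicographic order and bounds the total number of iterations. A secondary subtlety in soundness is verifying that local satisfaction genuinely lifts to global satisfaction under composition: I would check that the segment boundaries fixed by the segmentation (Definition~\ref{def:segmentation}) are exactly those required by the $\Sequence$ and $\Parallel_M$ clauses of Definition~\ref{def:satTBT}, so that substituting repaired subtraces cannot disturb the partition witnessing $\sigma_R\models_G\tree$.
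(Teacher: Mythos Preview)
Your proposal takes a substantially different route than the paper. The paper's soundness argument is a one-liner: when local repairs fail, Line~18 escalates toward the root, and a root-level $\Local$ call is precisely the full MILP $\Repair(\tree,M,\sigma,\cdot)$, which by Equation~\eqref{eq:repair} already forces $\sigma_R\models\tree$; hence the only way the algorithm returns \textsf{none} is if no length-preserving satisfying trace exists. For termination, the paper does a direct case split over the three branches and argues that, thanks to \textsf{mergeOverlap}, each iteration either removes an element from $L$, enlarges some set in $L$ by at least one node, or merges two sets in $L$, so the configuration drifts monotonically toward the single set containing the whole segmentation.

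There is a genuine gap in your soundness argument. You derive from $\Encode_{\sigma_R}(\hat\tree)$ that ``every leaf inside a repaired subtree satisfies its STL formula on its segment,'' and then appeal to the leaf-characterization proposition for $G$. This inference breaks once Line~18 has put a non-leaf ancestor $\hat\tree$ into $l$: satisfying $z^{\hat\tree}_{[i:j]}=1$ in Equation~\eqref{eq:tbtencoding} only says that \emph{some} split (for $\Sequence$) or \emph{some} child choice (for $\Fallback$) works, not the particular indices recorded in $G$. So the leaves of $G$ below $\hat\tree$ need not be satisfied on their $G$-segments, and the leaf proposition does not apply to the composed trace as you claim. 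You can repair this by strengthening your invariant to speak about satisfaction of the \emph{subtrees} stored in $C$ on their $G$-segments, and then arguing inductively up the segmentation graph (Definition~\ref{def:segmentation}) that such a ``cut'' through $G$ suffices for $\sigma_R\models\tree$; but that is a different lemma than the leaf proposition. The paper avoids this entirely by treating the root case as the fallback guarantee.

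Your termination measure is plausible, but note that the paper's argument hinges on a monotone-growth observation (sets in $L$ only get larger or merge, bounded by $|V|$) rather than on tree height. Your height-based rank handles Line~18 cleanly, but for Line~15 you still need exactly the paper's ingredient, namely that \textsf{mergeOverlap} prevents the re-inserted affected leaves from splitting back into configurations already visited; your secondary component ``number of leaves not yet in $C$'' does not decrease on Line~15 iterations (nothing is committed to $C$ there), so as stated your lexicographic order is not strictly decreasing on that branch.
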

\begin{proof}
It is sound because if there are no successful local repairs, $\Incremental$ repair moves up its segmentation graph and finally repairs the root node (see Line 18 in Algorithm \ref{alg:incremental}). 
I.e., in the worst case, the MILP finds a trace $\sigma_R$ that satisfies the constraints on the TBT and the model, while ignoring the relation to the original trace $\sigma$. 
It only returns $\textsf{none}$ if there is no such trace $\sigma_R$ with $|\sigma_R|=|\sigma|$.
It terminates because $\textsf{mergeOverlap}$ avoids having segments that affect each other over and over again.
Assume $L = \{ V_1, V_2\}$.
There are three cases for an attempt to repair $V_2$: the repair was successful while no other node is affected (Line 13), the repair of $V_2$ was successful but there was an affection (Line 15), or there was no successful repair and $\Incremental$ moves up the tree (Line 18).
For the first case, the size of $L$ is reduced since one element was removed from $L$ (Line 5).
For the second case, there must be a node $v$ that is affected. 
Then, it either has an overlap with $V_1$ and according to Line 20 both sets are merged or there is no overlap and both sets $V_1$ and $V_2$ remain in $L$, while $V_2$ is increased by an additional element -- converging to repairing the whole segmentation.
The same holds for the last case.
\qed
\end{proof}

\paragraph{Landmark-Based Repair Strategy:} The incremental repair strategy can be complemented by an approach that deals with the disjunctive constraints encountered in the encoding of the repair problem~\eqref{eq:tbtencoding}. These constraints arise, for instance, when a fallback operator is encoded, or at a leaf node with a $\F$ formula. Given a disjunctive formula $\bigvee_{j=1}^m \psi_j$, the landmark based strategy uses information from the original trace to select a candidate $\psi_j$ that will be satisfied.

Formally, a candidate for a landmark is a minimal set of propositions that is sufficient to satisfy the TBT specification. 
This simplifies the optimization problem to a linear program for the $L_1$ and Hamming distance, as it no longer requires integer or binary variables, and only linear constraints remain\footnote{Using the robust semantics of TBTs requires the Big-M encoding of the absolute value function, which in turn adds binary variables.}. 
Outside the optimization problem, we can also use the robust semantics of TBTs to rank the candidates, i.e., we interpret robustness of \ap s as a heuristic.
However, in Section \ref{sec:experiments}, we will see that the repair provides a fast solution for the repair and improves upon that, similar to an anytime algorithm.

\begin{example}\label{ex:landmark}
Given a leaf node $\F_{[0,10]} \atLfirst$ with $f_{\atLfirst}(x) = x - 2.5$, $\Encode_{\sigma_R}(\tree)$ unfolds into a disjunction of $\atLfirst$ at the next positions.
The trace shown in Figure \ref{fig:candidates} violates the property because there is no value of $x$ at any position $i$ in the trace where $f_{\atLfirst}(x)$ yields a positive value.
Landmark-based repair will pick a candidate to solve the disjunction.
Here, eleven candidates for landmarks exist.
The best candidate w.r.t.\ the robustness value of $\atLfirst$ is at position $i=7$ with value close to $-0.5$ (highlighted by the green circle).
The landmark is encoded by $\sigma_R(8) \models \atLfirst$.
\end{example}

Algorithm \ref{alg:landmark} provides an iterative repair strategy based on landmarks.
The repair receives a segmentation, a system model, and the violating trace.
In Line~3, the set of candidates is computed and ordered.
It then takes the most promising candidate (Line~5), sets an upper bound on the repair cost (Line~6), encodes the repair (Line~7), and checks if the current landmark returns a better repair (Line~8).
Note that the constraints on the system model and cost function remain unchanged throughout the iteration (Line~4).
Therefore, the LP can efficiently updated by simply removing the previous landmark and adding the next one, avoiding the need to rebuild the LP from scratch.
For the experiments, two optimizations were implemented but are omitted here for brevity.
First, candidates are computed on-the-fly instead of upfront as robustness maximizes.
Second, candidates are ranked by robustness and explored while maintaining a minimum distance w.r.t.\ the position of previously tested candidates.
This distance starts large and gradually decreases whenever no candidates are left within this distance.
Once the distance reaches $1$ and no further candidates are available, the repair terminates.
For instance, considering Example \ref{ex:landmark} where position $i=7$ is chosen as first landmark, a distance of two excludes Positions $5,6,8,9$ and allows to explore Position $3$ next.
So far, the correct distance and its rate of decrease are user-defined parameters.

\begin{algorithm}
\caption{The $\Landmark_G(M, \sigma)$ repair strategy.}
\label{alg:landmark}
\begin{algorithmic}[1]
\Require A segmentation $G$, a system model $M$, and trace $\sigma$
\Ensure Repaired trace $\sigma_R$ or $\textsf{none}$
\State $\sigma_R \gets \textsf{none}$
\State $J_{upper} \gets \infty$
\State $\candidates \gets \computeCandidates(G, \sigma)$ \Comment{List of landmarks ordered by heuristic}
\While {$\candidates \neq \emptyset$}
    \State $\landmarks \gets \candidates.pop()$
    \State $\mathit{improve} \gets J(\sigma, \sigma_R) < J_{upper}$ \Comment{Upper bound on the cost of the repair}
    \State $\sigma_R' \gets \Repair(\true, M, \sigma, \landmarks \wedge \mathit{improve})$ \Comment{$\landmarks$\ replaces TBT}
    \If{$\sigma_R' \neq \textsf{none}$} \Comment{If there is a repair, it must be better}
        \State $\sigma_R \gets \sigma_R'$
        \State $J_{upper} \gets J(\sigma, \sigma_R')$
    \EndIf
\EndWhile
\State \Return $\sigma_R$ 
\end{algorithmic} 
\end{algorithm}

\paragraph{Discussion}
Both strategies can be combined to efficiently solve complex repairs.
The incremental repair allows to divide the trace into smaller segments, each of which can be repaired locally.
When a local property contains multiple disjunctions, e.g., when using an unbounded $\F$, the optimization becomes more challenging.
In such cases, we can use the landmark-based repair. 
If the combination of both strategies fails, refining the TBT specification is a viable option.

The closer the violating trace is to satisfying the specification, the more effective segmentation and landmarks are as starting points for the repair.
For instance, if the TBT consists of a sequence of two nodes, with the segmentation assigning a very short segment to the first node -- potentially too short for a repair -- and the remainder of the trace to the second node, incremental repair will converge to the full encoding. 
In such cases, starting with the full encoding or even finding a new unrelated trace would be faster. 

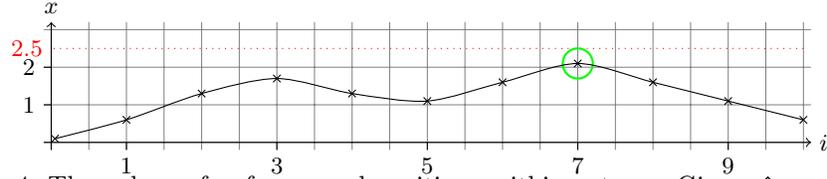
\begin{figure}[H]
\vspace{1mm}
    \centering
\begin{tikzpicture}
\draw[step=0.5cm, very thin,color=gray] (-0.1,-0.1) grid (10.1,1.6);
\draw[-,dotted,red] (10,1.25) -- (0,1.25) node[left]{2.5};
\draw[->] (0,-0.1) -- (0,1.6) node[above] {$x$};
\draw[->] (-0.1,0) -- (10.1,0) node[right] {$i$};
\draw[green, thick] (7,1.05) circle (0.2cm);
\foreach \x in {1, 3, 5, 7, 9} {
    \draw (\x, 0) -- (\x, -0.1) node[below] {\x};
}
\draw (0,0.5) -- (-0.1,0.5) node[left] {1};
\draw (0,1) -- (-0.1,1) node[left] {2};

\tikz \draw plot[mark=x,smooth] file {plots/ap.table};
\end{tikzpicture}
\caption{The values of $x$ for several positions within a trace. Given $\F_{[0,10]} \atLfirst$ with $f_{\atLfirst}(x) = x -2.5$, position $i=7$ is a good candidate for a landmark.}
\label{fig:candidates}
\end{figure}
\section{Empirical Evaluation and Case-Studies}
\label{sec:experiments}
This section presents two case studies: the robot search task shown in Figure \ref{fig:exampletbt} and the automated landing of a UAV on a ship.
The first case study illustrates the impact of different cost functions, while the second showcases incremental repair and compares it to landmark-based repair.
All experiments were run on a single 16-core machine with a 2.50 GHz $11^{th}$ Gen Intel(R) Core(TM) i7-11850H processor with 32 GB RAM.
The algorithms are implemented in Python using Gurobi\footnote{\url{https://www.gurobi.com/}: Gurobi Optimizer version 11.0.0 build v11.0.0rc2} as optimizer. 
Segmentations were obtained using the approach from~\cite{schirmer_2024_13807484}.

\subsection{Robot Search Task}
Using the robot search task introduced in Example \ref{fig:exampletbt}, we demonstrate two different cost functions using incremental repair.
The trace we used has a best segmentation in which both the location and one of the fruits were narrowly missed.
Also, the restricted area that was meant to be avoided was breached.
The original trace, consisting of $540$ entries, was reduced to $68$ entries through subsampling by \cite{schirmer_2024_13807484}, which computed the best segmentation in under a second.
Figure \ref{fig:differentcost} depicts the results of the incremental repair: on the left, using $\mathit{L1}$ as cost function, and on the right, using a weighted combination $W$ of $\mathit{L1}$ and robustness $R$ with weights $0.01$ and $0.99$, respectively.
It took $6$s to repair using $\mathit{L1}$ and $21$s using $W$.
Both repairs reach the location $L2$ and find a fruit (top left).
Note that the repair using $W$ provides larger separation from the restricted area compared to $\mathit{L1}$ but still resembles the original trace.
The runtime of the repair is mostly impacted by $\G\ \avoidArea$, as it relies on the whole trace.
Also, $\ap$ $\avoidArea$ adds constraints to keep points outside the region and maintain a minimum distance to the corners of the restricted areas.
This can be avoided by a syntactic reformulation of the TBT, optimizing it to provide segmentation graphs that are easier to repair. 
For instance, by moving the $\G\ \avoidArea$ invariant into the individual leaf nodes.
Yet, this is not the scope of this paper.

\begin{figure}[t]
\begin{minipage}{0.5\textwidth}
    \includegraphics[trim={4cm 2.0cm 4.5cm 3.0cm},clip,width=1.0\textwidth]{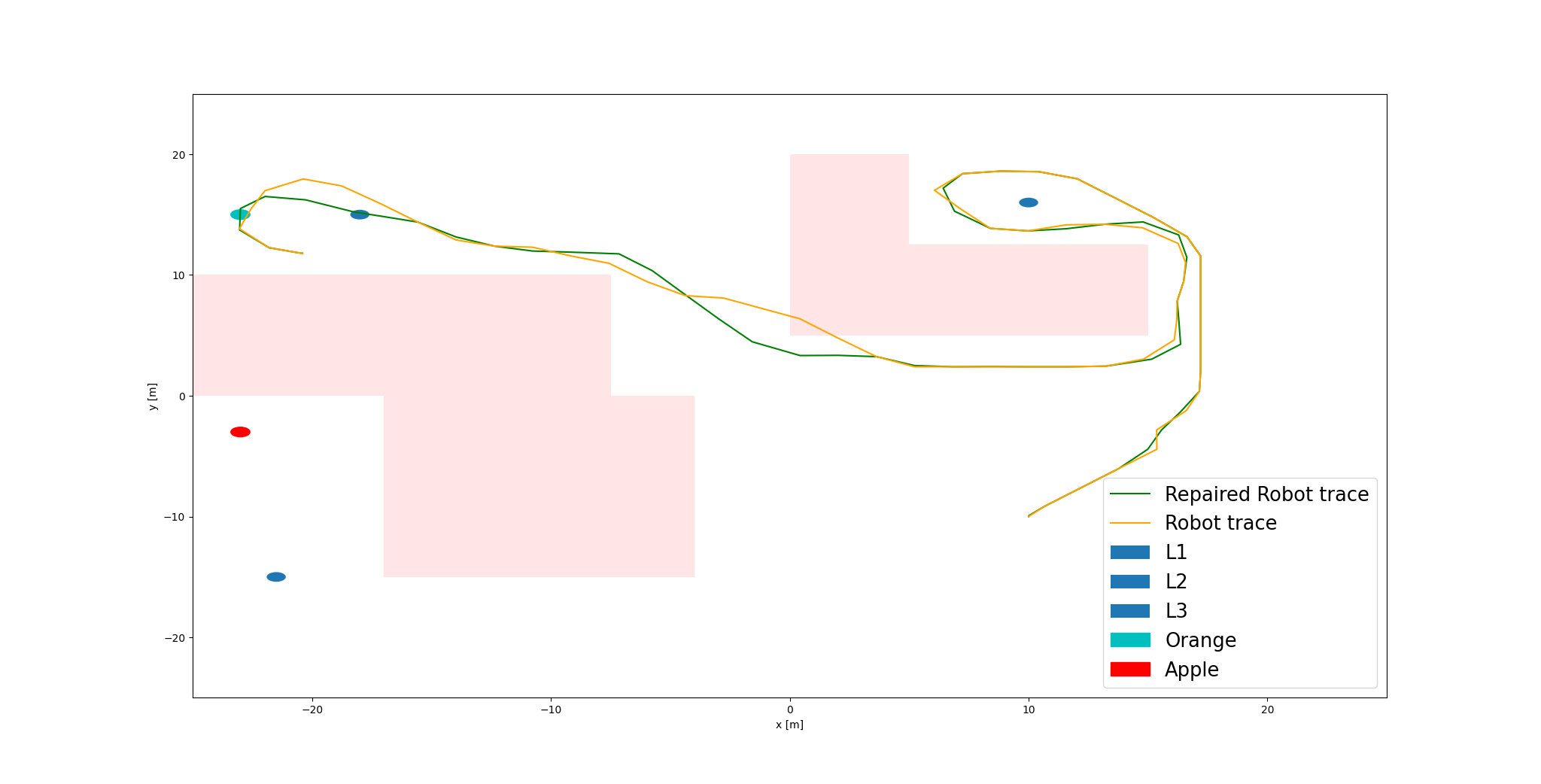}
\end{minipage}
\begin{minipage}{0.5\textwidth}
    \includegraphics[trim={4cm 2.0cm 4.5cm 3.0cm},clip, width=1.0\textwidth]{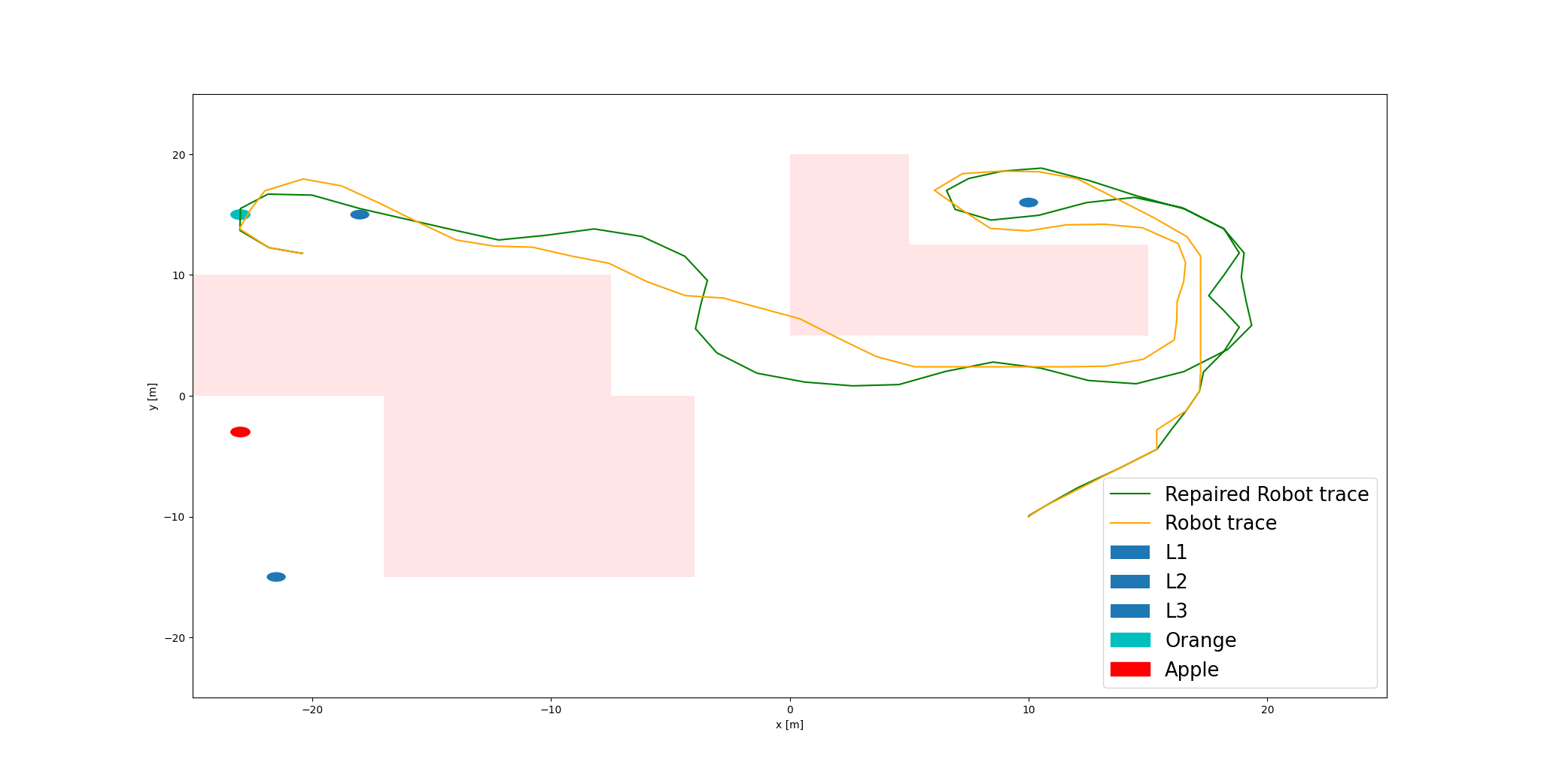}
\end{minipage}
\caption{Repair using different cost functions: on the left using $L1$ and on the right using combination $W$ with weights ($L1: 0.01, R: 0.99$).}
\label{fig:differentcost}
\end{figure}

\subsection{Automated Ship Deck Landing}
Landing on a ship deck is a  challenging task, wherein various landing aids and maneuvers need to be carefully selected~\cite{dlr140951,dlrreport}.
The benefits of TBT segmentation for an automated lander were previously discussed in \cite{10.1145/3641513.3650180}.
The depicted TBT in Figure \ref{fig:tbt} is a simplified version of the used TBT $\tree$, where all landing maneuvers are used in the fallback node: \emph{Straight-in}, \emph{Lateral}, \emph{45-Degree}, and \emph{Oblique}.
Each of them is represented by a sequence node.
Here, we consider the \emph{45-Degree} sequence node: $\Sequence([\Leaf(\F\ \nextTo(s)), \Leaf(\G_{[0,5s]}\ (\nextTo(s)$ $\wedge~ \alignedHeading(s)))$, $\Leaf(\alignedHeading(s)\ \Un\ \aboveTouchdown(s) \wedge \bullet)])$ where $s$ contains the position, velocity, and heading of the ship and the UAV.
We abbreviate this sequence by $\Sequence([\tree_1, \tree_2, \tree_3])$, e.g., $\tree_1=\Leaf(\F\ \nextTo(s))$.
The $\ap$ $\nextTo$ represents that the UAV is diagonally behind the ship, $\alignedHeading$ ensures that the UAV has a heading that is aligned with the ship heading, $\aboveTouchdown$ represents that the UAV is above the touchdown point of the ship.
The other sequence nodes beneath the fallback are similar, only the target position and the prescribed heading change.
The final leaf node also given in Figure \ref{fig:tbt} is common for all behaviors and specifies the descend property  $\Leaf(\F\ \descended(s))$, where $\descended(s)$ states that the UAV landed on the touchdown point.
We abbreviate this node by $\tree_4$.
For more detailed information on the TBT $\tree$, we refer to \cite{10.1145/3641513.3650180}.
Next, we repair a violating trace from \cite{10.1145/3641513.3650180} using segmentation information, showing the \emph{45-Degree} landing maneuver being the closes to satisfy the specification.

\noindent\emph{Incremental Repair}
All segmentations were computed in under $30$s.
The original trace had a mission-time of $126$s and a length of $25,349$, subsampled to $1014$ by the segmentation.
Nonlinear helicopter dynamics were simplified into four independent integrator chains similar to  Example \ref{exp:dynamic_system}: one for each of the three inertial axes ($x,y,z$) and one for the heading \cite{kooOutputTrackingControl1998}.
The repair works on the subsampled trace using $L1$ as the cost function.
The segments of the leaf nodes are: $(\tree_1, 0, 265)$, $(\tree_2, 266, 306)$, $(\tree_3, 307, 581)$, and $(\tree_4, 582, 1013)$.
Segments $\tree_1, \tree_2,$ and $\tree_3$ are violating.
As a reference for a full MILP encoding of the entire trace, we encoded a simplified landing as $\Leaf(\landing)$ using the formula $\landing= \F (\nextTo(s) \wedge \X (\stayPos \wedge \F_{[6,\infty]}( \mathit{\alignedHeading}(s) \wedge \aboveTouchdown(s) \wedge \F \descended(s) )))$, i.e., we replaced $\Un$ with $\F$ by omitting the left side.
We chose $\Leaf(\landing)$ because a full MILP encoding, i.e., directly solving the root node, caused an out-of-memory error.
Note that this allows a baseline comparison to standard approximations techniques supported by the optimizer.
We use Gurobi with its default parameters, which include features such as root relaxation and presolve. 
Experimental results in Table \ref{table:landing} show that the reference full MILP repair of $\Leaf(\landing)$ does not scale well for longer trace -- it took over 2000s -- while incremental repair took less than a minute.
The detailed steps of the incremental repair show that ensuring $\valid$ transitions did not save time for Steps 1, 4, and 6 ($8.87$s were unnecessary spend), but Step 3 saved presumably around $11$s.
The reason for this is that the last state of the segment must change too radically to satisfy the specification and only when both leaf nodes are encoded (Step 3), a matching state can be found.
The evaluation shows that \textsf{getCommonAncestor} (Line 18 in Algorithm \ref{alg:incremental}) was never invoked; only the affected leaves needed to be accounted for.
\begin{wrapfigure}{r}{0.65\textwidth}
  \vspace{-1cm}
  \begin{center}
    \includegraphics[width=0.65\textwidth]{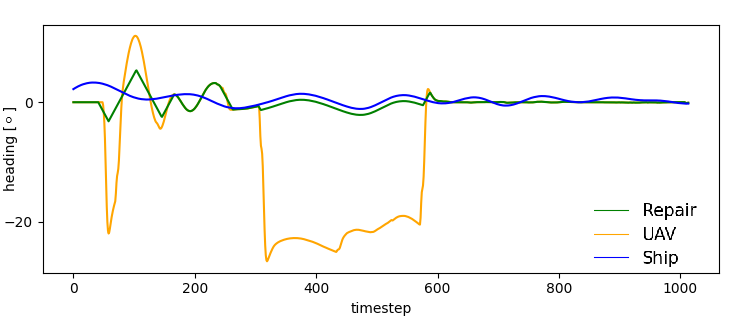}
  \end{center}
  \caption{Repair to satisfy the heading constrains in $\tree_3$.}
  \vspace{-0.5cm}
  \label{fig:L1}
\end{wrapfigure}
This avoided the need for a costly repair.
The result of the repair is shown in Figures \ref{fig:plotlanding} and \ref{fig:L1}.
The plots show that the segmentation provides a good way to decompose the specification as only small adjustments of the positions were required to satisfy $\tree$.
The most significant changes were due to $\tree_3$ that requires an aligned heading (see Timesteps 350 to 600 in Figure \ref{fig:L1}).

\begin{table*}
    \centering
    \begin{tabular}{c|c|c}
        \textbf{Formula} & \textbf{$L$} & \textbf{Time (s)}\\
        \hline
        \hline
        $\Leaf(\landing)$  & $\{\{ (\Leaf(\landing), 0, 1013) \}\}$ & 2200.40\\
        \hline
        $\tree$ & $\{\{(\tree_1, 0, 265)\}, \{(\tree_2, 266, 306)\}, \{(\tree_3, 307, 581)\}\}$ & \textbf{29.62}\\
        Step 1 & $l= \{(\tree_3, 307, 581)\}, f=\valid$ & \emph{3.72} \xmark\\
        Step 2 & $l= \{(\tree_3, 307, 581)\}, f=\loose$ & \emph{3.86} \cmark\\
         & $\{\{(\tree_1, 0, 265)\}, \{(\tree_2, 266, 306)\}, \{(\tree_3, 307, 581), (\tree_4, 582, 1013)\}\}$ \\
        Step 3 & $l= \{\{(\tree_3, 307, 581), (\tree_4, 582, 1013)\}\}, f=\valid$ & \emph{10.93} \cmark\\
         & $\{\{(\tree_1, 0, 265)\}, \{(\tree_2, 266, 306)\}\}$\\
        Step 4 & $l= \{(\tree_2, 266, 306)\}, f=\valid$ & \emph{0.60} \xmark\\
        Step 5 & $l= \{(\tree_2, 266, 306)\}, f=\loose$ & \emph{0.85} \cmark\\
         & $\{\{(\tree_1, 0, 265), (\tree_2, 266, 306)\}\}$\\
        Step 6 & $l= \{(\tree_1, 0, 265), (\tree_2, 266, 306)\}, f=\valid$ & \emph{4.55} \xmark\\
        Step 7 & $l= \{(\tree_1, 0, 265), (\tree_2, 266, 306)\}, f=\loose$ & \emph{5.11} \cmark\\
    \end{tabular}\vspace{3mm}
    \caption{Trace repair results with intermediate steps of the incremental repair. The Time column includes setting up the model and solving it. The results show that incrementally repairing $\tree$ is more efficient than the reference repair of $\Leaf(\landing)$. \xmark\ represents infeasible runs whereas \cmark\ represents successful runs.}
    \label{table:landing}
    \vspace{-0.5cm}
\end{table*}

\noindent\emph{Landmark-based Repair}
To illustrate the impact of disjunctions, we use incremental repair while omitting $\bullet$ from the last leaf node of the \emph{45-Degree} sequence node.
Specifically, we consider $\Leaf(\alignedHeading(s)\ \Un\ \aboveTouchdown(s))$ instead of $\Leaf(\alignedHeading(s)\ \Un\ \aboveTouchdown(s) \wedge \bullet)$. 
Therefore, the optimizer must determine the optimal position to satisfy $\aboveTouchdown(s)$, while $\bullet$ constraints it to the last position of this segment. 
As a result, the computation time increases from $29.62$s as in Table \ref{table:landing} to $278.32$s using incremental repair.
Figure~\ref{fig:landmark_experiments} illustrates this effect, comparing it to the results from the iterative landmark-based repair, where dots represent when solutions were found.
The time limit was set to $300$s.
The landmark-based repair finds its first solution after just $12$s and continues to improve upon it. 
Within approximately $40$s, a repair is achieved that is comparable to the one found by the incremental repair while saving $235$s.
\begin{wrapfigure}{r}{0.5\textwidth}
\vspace{-1.0cm}
  \begin{center}
    \begin{tikzpicture}
\begin{axis}[
    width=0.5\textwidth,
    axis lines = left,
    xlabel = \(Time (s)\),
    xtick distance=50,xmax=305,
    ylabel = {$J(\sigma, \sigma_R)$},
    ytick distance=200,ymin=5100,ymax=6520
]
\addplot [blue, mark=*] table[x index=1, y index=2] {plots/landmark_graph_without_bullet.txt};
\addlegendentry{Landmark-based}
\addplot [red, only marks, mark=*] coordinates {(278.32,5146.71)};
\addlegendentry{Incremental}

\node at (axis cs: 230, 5810) {$\Delta Time = 235$s};
\node at (axis cs: 250, 5650) {$\Delta J = 2.5$};
\addplot [thick,<-] coordinates {(50,5200) (170,5650)};
\addplot [thick,<-] coordinates {(270,5200) (175,5650)};

\end{axis}
\end{tikzpicture}
  \end{center}
\vspace{-0.7cm}
  \caption{Comparison of repair strategies.}
   \label{fig:landmark_experiments}
\vspace{-0.5cm}
\end{wrapfigure}
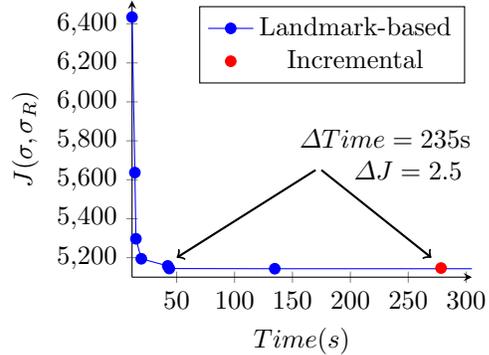
Figure~\ref{fig:landmark_plot} shows the repair. 
Note that, without the $\bullet$, the leaf node $\tree_3$, which contains the $\aboveTouchdown$ proposition, can be satisfied earlier (around Timestep $450$), thereby preventing the need of repairing heading thereafter.
Additionally, the repair chooses the same position to repair the $\nextTo$ proposition in $\tree_1$ as in Figure \ref{fig:plotlanding}.
As next experiment, we applied the same TBT specification, but instead of using the subsampled trace, we encoded the full original trace that contains $25,348$ entries. 
We were able to successfully identify a repair within $362$s.

\begin{figure}[t]
    \centering
    \includegraphics[width=0.6\textwidth]{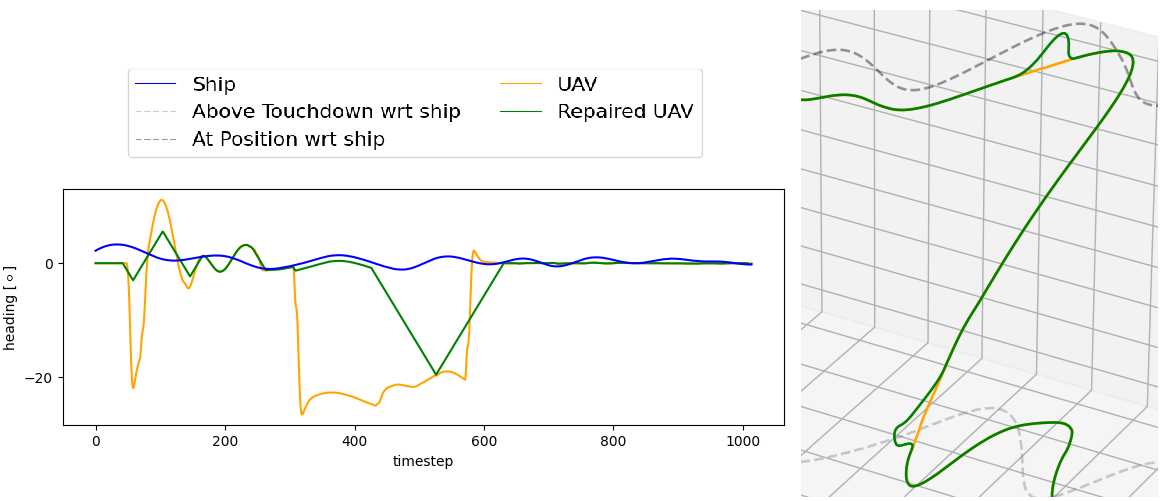}
    \caption{The repair chooses an earlier $\aboveTouchdown$ when omitting $\bullet$.}
    \label{fig:landmark_plot}
\end{figure}

\section{Related Work}
STL is mainly used as a specification language for controlling system behaviors.
In \cite{raman2014model}, STL formulas are transformed into a MILP for model-predictive control, generating control inputs that ensure conformance to the specification.
However, MILP's complexity limits this approach, especially with nested formulas or longer trajectories, making it unsuitable for long-horizon trajectory planning.
To address these limitations, \cite{kapoor2024safe} proposes a method to structurally decompose STL formulas to then incrementally solve them.
However, their method of decomposition does not handle disjunction.
In this work, rather than planning trajectories, we combine TBT segmentation and a MILP encoding of TBTs to repair given trajectories such that the repaired trajectory satisfies its TBT specification.

Landmarks have been studied for strategy solving \cite{DBLP:conf/cav/BaierCFFJS21} and planning \cite{DBLP:journals/corr/abs-1107-0052} as key features that must be true on any solution.
In this work, we adopt a similar conceptual role but with a distinct technical use.
In this paper, a landmark serves as a sufficient feature that guarantees the satisfaction of a TBT specification.
This enables a more efficient linear program encoding, significantly extending the capability to handle longer traces.

Falsification \cite{DBLP:conf/tacas/AnnpureddyLFS11} tries to find traces that ``falsify'' a given specification by using stochastic optimization to minimize its robustness.
Trace synthesis \cite{DBLP:conf/cav/SatoAZH24}, on the other hand, generates traces that satisfy the specification.
In contrast, this work addresses the problem of trace repair: given a violating trace, we minimally modify it so that the resulting trace satisfies the specification.
Our approach avoids stochastic optimization and uses a MILP formulation that ensures specification satisfaction while minimizing changes to the violating trace.

\section{Conclusions}
We have presented methods for repairing traces of CPS that violate a given TBT specification.
While a MILP could theoretically solve the problem, our experiments show that this is too expensive in practice.
To address this, we introduced an incremental repair strategy that uses the segmentation information from a TBT monitor to repair violating segments locally. 
Additionally, we presented a landmark-based repair strategy, an iterative approach that avoids MILP encoding of the TBT by using landmarks.
The landmarks allow us to formulate the repair as a linear program.
Our experiments demonstrate that the two strategies make it possible to repair traces of more than $25,000$ entries in under ten minutes, while the full MILP runs out of memory.
Future work will explore the use of trace repair for reinforcement learning, focusing on situations where agents fail their task and need assistance.

%
%
\bibliographystyle{splncs04}
\bibliography{references}

\begin{thebibliography}{10}
\providecommand{\url}[1]{\texttt{#1}}
\providecommand{\urlprefix}{URL }
\providecommand{\doi}[1]{https://doi.org/#1}

\bibitem{6606326}
Abiyev, R.H., Akkaya, N., Aytac, E.: Control of soccer robots using behaviour trees. In: 9th Asian Control Conference, {ASCC} 2013, Istanbul, Turkey, June 23-26, 2013. pp.~1--6. {IEEE} (2013). \doi{10.1109/ASCC.2013.6606326}, \url{https://doi.org/10.1109/ASCC.2013.6606326}

\bibitem{DBLP:conf/tacas/AnnpureddyLFS11}
Annpureddy, Y., Liu, C., Fainekos, G., Sankaranarayanan, S.: S-taliro: {A} tool for temporal logic falsification for hybrid systems. In: Abdulla, P.A., Leino, K.R.M. (eds.) Tools and Algorithms for the Construction and Analysis of Systems - 17th International Conference, {TACAS} 2011, Held as Part of the Joint European Conferences on Theory and Practice of Software, {ETAPS} 2011, Saarbr{\"{u}}cken, Germany, March 26-April 3, 2011. Proceedings. Lecture Notes in Computer Science, vol.~6605, pp. 254--257. Springer (2011). \doi{10.1007/978-3-642-19835-9\_21}, \url{https://doi.org/10.1007/978-3-642-19835-9\_21}

\bibitem{DBLP:conf/cav/BaierCFFJS21}
Baier, C., Coenen, N., Finkbeiner, B., Funke, F., Jantsch, S., Siber, J.: Causality-based game solving. In: Silva, A., Leino, K.R.M. (eds.) Computer Aided Verification - 33rd International Conference, {CAV} 2021, Virtual Event, July 20-23, 2021, Proceedings, Part {I}. Lecture Notes in Computer Science, vol. 12759, pp. 894--917. Springer (2021). \doi{10.1007/978-3-030-81685-8\_42}, \url{https://doi.org/10.1007/978-3-030-81685-8\_42}

\bibitem{BEMPORAD20023}
Bemporad, A., Morari, M., Dua, V., Pistikopoulos, E.N.: The explicit linear quadratic regulator for constrained systems. Automatica  \textbf{38}(1),  3--20 (2002). \doi{https://doi.org/10.1016/S0005-1098(01)00174-1}, \url{https://www.sciencedirect.com/science/article/pii/S0005109801001741}

\bibitem{DBLP:conf/sle/GhzouliBJDW20}
Ghzouli, R., Berger, T., Johnsen, E.B., Dragule, S., Wasowski, A.: Behavior trees in action: a study of robotics applications. In: L{\"{a}}mmel, R., Tratt, L., de~Lara, J. (eds.) Proceedings of the 13th {ACM} {SIGPLAN} International Conference on Software Language Engineering, {SLE} 2020, Virtual Event, USA, November 16-17, 2020. pp. 196--209. {ACM} (2020). \doi{10.1145/3426425.3426942}, \url{https://doi.org/10.1145/3426425.3426942}

\bibitem{10155191}
He, Z., Zhang, X., Jones, S., Hauert, S., Zhang, D., Lepora, N.F.: Tacmms: Tactile mobile manipulators for warehouse automation. IEEE Robotics and Automation Letters  \textbf{8}(8),  4729--4736 (2023). \doi{10.1109/LRA.2023.3287363}

\bibitem{DBLP:journals/corr/abs-1107-0052}
Hoffmann, J., Porteous, J., Sebastia, L.: Ordered landmarks in planning. CoRR  \textbf{abs/1107.0052} (2011), \url{http://arxiv.org/abs/1107.0052}

\bibitem{9355019}
Hu, H., Jia, X., Liu, K., Sun, B.: Self-adaptive traffic control model with behavior trees and reinforcement learning for agv in industry 4.0. IEEE Transactions on Industrial Informatics  \textbf{17}(12),  7968--7979 (2021). \doi{10.1109/TII.2021.3059676}

\bibitem{kapoor2024safe}
Kapoor, P., Kang, E., Meira-G{\'o}es, R.: Safe planning through incremental decomposition of signal temporal logic specifications. In: NASA Formal Methods Symposium. pp. 377--396. Springer (2024)

\bibitem{kooOutputTrackingControl1998}
Koo, T., Sastry, S.: Output tracking control design of a helicopter model based on approximate linearization. In: Proceedings of the 37th {{IEEE Conference}} on {{Decision}} and {{Control}}. vol.~4, pp. 3635--3640. IEEE (1998). \doi{10.1109/CDC.1998.761745}

\bibitem{leucker2007regular}
Leucker, M., S{\'a}nchez, C.: Regular linear temporal logic. In: International colloquium on theoretical aspects of computing. pp. 291--305. Springer (2007)

\bibitem{raman2014model}
Raman, V., Maasoumy, M., Donz{\'e}, A.: Model predictive control from signal temporal logic specifications: A case study. In: Proceedings of the 4th ACM SIGBED International Workshop on Design, Modeling, and Evaluation of Cyber-Physical Systems. pp. 52--55 (2014)

\bibitem{DBLP:conf/cav/SatoAZH24}
Sato, S., An, J., Zhang, Z., Hasuo, I.: Optimization-based model checking and trace synthesis for complex {STL} specifications. In: Gurfinkel, A., Ganesh, V. (eds.) Computer Aided Verification - 36th International Conference, {CAV} 2024, Montreal, QC, Canada, July 24-27, 2024, Proceedings, Part {III}. Lecture Notes in Computer Science, vol. 14683, pp. 282--306. Springer (2024). \doi{10.1007/978-3-031-65633-0\_13}, \url{https://doi.org/10.1007/978-3-031-65633-0\_13}

\bibitem{10.1162/ARTL_a_00192}
Scheper, K.Y.W., Tijmons, S., de~Visser, C.C., de~Croon, G.C.H.E.: {Behavior Trees for Evolutionary Robotics†}. Artificial Life  \textbf{22}(1),  23--48 (02 2016). \doi{10.1162/ARTL_a_00192}, \url{https://doi.org/10.1162/ARTL\_a\_00192}

\bibitem{schirmer_2024_13807484}
Schirmer, S., Singh, J., Jensen, E., Dauer, J., Finkbeiner, B., Sankaranarayanan, S.: Temporal behavior trees (Sep 2024). \doi{10.5281/zenodo.13807484}, \url{https://doi.org/10.5281/zenodo.13807484}

\bibitem{10.1145/3641513.3650180}
Schirmer, S., Singh, J., Jensen, E., Dauer, J., Finkbeiner, B., Sankaranarayanan, S.: Temporal behavior trees: Robustness and segmentation. In: Proceedings of the 27th ACM International Conference on Hybrid Systems: Computation and Control. HSCC '24, Association for Computing Machinery, New York, NY, USA (2024). \doi{10.1145/3641513.3650180}, \url{https://doi.org/10.1145/3641513.3650180}

\bibitem{dlrreport}
Schmelz, T., Lantzsch, R.: Abschlussbericht: F\&t studie - pilotenassistenz für schiffsdecklandungen (pilodeck)[final report: F\&t study - pilot assitance for ship deck landing (pilodeck)],. Technical Note AHD-TN-ESPE-302-18  (2018)

\bibitem{dlr140951}
Schuchardt, B.I., Dautermann, T., Donkels, A., Krause, S., Peinecke, N., Schwoch, G.: Maritime operation of an unmanned rotorcraft with tethered ship deck landing system. CEAS Aeronautical Journal  \textbf{12}(1), ~1--9 (9 2020), \url{https://elib.dlr.de/140951/}

\bibitem{9921558}
Sidorenko, A., Hermann, J., Ruskowski, M.: Using behavior trees for coordination of skills in modular reconfigurable cppms. In: 2022 IEEE 27th International Conference on Emerging Technologies and Factory Automation (ETFA). pp.~1--8 (2022). \doi{10.1109/ETFA52439.2022.9921558}

\bibitem{vanderbeiLinearProgrammingFoundations2020}
Vanderbei, R.J.: Linear {{Programming}}: {{Foundations}} and {{Extensions}}. International {{Series}} in {{Operations Research}} \& {{Management Science}}, Springer International Publishing, 5th edn. (2020)

\bibitem{williamsModelBuildingMathematical2013}
Williams, H.P.: Model {{Building}} in {{Mathematical Programming}}. New {{York Academy}} of {{Sciences Series}}, John Wiley \& Sons, Incorporated, 1st edn. (2013)

\end{thebibliography}

\end{document}